\newtheorem{theorem}{Theorem}
\newtheorem{lemma}{Lemma}
\newtheorem{definition}{Definition}
\newtheorem*{perfect fluid}{Vacuum and perfect fluid cases}
\newcommand{\R}{\mathbb{R}}
\begin{document}


\title{Kantowski-Sachs cosmology with Vlasov matter}

\author{\textsc{David FAJMAN}}
\author{\textsc{Gernot HEI}\ss\textsc{EL}}
\thanks{The authors acknowledge support of the Austrian Science Fund (FWF) through the Project \emph{Geometric transport equations and the non-vacuum Einstein flow} (P 29900-N27).}

\address{\href{http://gravity.univie.ac.at}{Gravitational Physics Group\\
	Faculty of Physics\\
	University of Vienna\\
	Austria}}
	
	

\begin{abstract}
We analyse the Kantowski-Sachs cosmologies with Vlasov matter of massive and massless particles using dynamical systems analysis. We show that generic solutions are past and future asymptotic to the non-flat locally rotationally symmetric Kasner vacuum solution. Furthermore, we establish that solutions with massive Vlasov matter behave like solutions with massless Vlasov matter towards the singularities.
\end{abstract}

\maketitle



\section{Introduction}\label{S: introduction}

The Einstein-Vlasov system describes spacetimes containing ensembles of self-gravitating, collisionless particles and constitutes an excellent model for the large-scale structure of the universe where collisions of particles (modeling galaxies or galaxy clusters) are indeed rare. For the class of cosmological spacetimes certain stability results were resolved in recent years \cite{AnderssonFajman2017, Ringstroem2013} (cf \cite{FajmanEtAl2017, LindbladTaylor2017, Taylor2017} for recent stability results for the Einstein-Vlasov system in the asymptotically flat case).

Understanding cosmological dynamics of Einstein-matter systems is in general based on the study of dynamics of homogeneous cosmological models, whose
stability properties are analyzed in a next step. The program of classifying dynamics of homogeneous spacetimes with Vlasov matter contains a number of
results but for certain classes this problem is still open, among those are the Kantowski-Sachs models, which are the subject of the present paper.

\subsection{The dynamical systems approach to spatially homogenous Einstein-Vlasov cosmology}\label{SS: DS approach}

Spatially homogenous (SH) cosmologies with non-tilted perfect fluids with linear equations of state have been analysed with great success using dynamical systems methods; cf~\cite{Coley2003, HorwoodEtAl2003, WainwrightEllis1997}. This is due to the fact that in this context, the Einstein equations reduce to a system of ODEs. In contrast, the Einstein-Vlasov system remains a system of PDEs even in the spatially homogenous context; cf~\cite{Andreasson2011, Rendall2004, Rendall2008}.

Despite that, a route to analyse SH Einstein-Vlasov cosmologies with dynamical systems methods has been initiated by Rendall~\cite{Rendall1996}. The core idea is to work with distribution functions which exhibit the same symmetries as the underlying spacetime in a manner which we make precise in Section~\ref{S: SH LRS f} (Definition~\ref{D: 1}), and which then satisfy the Vlasov equation. The general functional form of such distributions has been investigated by Maartens and Maharaj in \cite{MaartensMaharaj1990} in the context of spatial homogeneity, and in~\cite{MaartensMaharaj1985} for SH cosmologies which are also locally rotationally symmetric (LRS). The cosmologies for which they succeeded without additional assumptions, and for which the found distributions are non-trivial are:
\begin{align}
&\text{SH but not LRS:}	&	&\text{Bianchi I.}	\label{E: SH list} \\
&\text{SH and LRS:}		&	&\text{LRS Bianchi I, II, III, VII$_0$, VIII, IX, Kantowski-Sachs.}	\label{E: SH LRS list}
\end{align}
The latter are all SH cosmologies which are compatible with local rotational symmetry, except Bianchi~V and VII$_h$.\footnote{Note that Bianchi type~III is the same as Bianchi type~VI$_{-1}$, cf~\cite[p~37]{WainwrightEllis1997}, and that LRS Bianchi types~I and VII$_0$ can be identified with each other; cf~\cite[p~2579--80, p~2584]{WainwrightEtAl1999}, \cite[p~1715]{RendallTod1999} or~\cite[App~B, p~667]{CalogeroHeinzle2011}.}

Though restricting to these distribution functions comes with the price of loosing generality it has the advantage that the Vlasov equation is solved explicitly and thus eliminated from the system. Hence what is left to solve is the Einstein part, which then again reduces to an autonomous system of ODEs. As with perfect fluids, the Vlasov matter then enters the dynamics only by the energy-momentum tensor. However, since the latter is given by integrals over the distribution function which is arbitrary up to its general functional form, Vlasov matter is still more difficult to deal with than perfect fluids, for which the energy momentum tensor is fixed by a single parameter. The approach taken by Rendall~\cite{Rendall1996} is to make the matter integrals tractable by imposing a further symmetry on the distribution functions -- reflection symmetry, which we will define in Section~\ref{S: SH LRS f} (Definition~\ref{D: 2}).

\subsection{Alternatives}

The strength of the dynamical systems approach to SH Einstein-Vlasov cosmology described above is that it allows for a global analysis of the state-space. The downside is that one has to impose certain symmetry assumptions in order to obtain a tractable dynamical system, thereby loosing generality. We thus refer to Nungesser et al.~\cite{LeeNungesser2017, LeeNungesser2018, Nungesser2010, Nungesser2012, Nungesser2013, NungesserEtAl2013} for an alternative route -- a small data future stability analysis within the class of SH spacetimes. The strength of this approach is that the Vlasov distributions are of full generality. The downside is that it does not yield global results on the full state-space.

\subsection{Literature}\label{SS: literature}

SH spacetimes can be classified by the properties of the underlying symmetry group. One distinguishes between Bianchi and Kantowski-Sachs spacetimes; cf~\cite[Subsec~1.2.2]{WainwrightEllis1997} and Subsection~\ref{SS: sym grp} below. The class of Bianchi spacetimes is divided further into types~I--IX. Bianchi~I has been analysed in the dynamical systems approach described in Subsection~\ref{SS: DS approach} in~\cite{Rendall1996} and~\cite{HeinzleUggla2006}. LRS Bianchi~I, II and III are covered by~\cite{Rendall2002, RendallTod1999, RendallUggla2000} and LRS Bianchi~VIII and~IX by~\cite{CalogeroHeinzle2010, Heissel2012}. Kantowski-Sachs has been investigated in~\cite{RendallTod1999}, however with the restriction to Vlasov matter of massless type. This leaves the corresponding Kantowski-Sachs analysis with massive Vlasov matter as the only open problem from the list \eqref{E: SH list}, \eqref{E: SH LRS list}. The present paper closes this gap.

There is also a series of papers \cite{CalogeroHeinzle2009, CalogeroHeinzle2011, Heissel2012} initiated by Calogero and Heinzle which is closely related. These are concerned with a broadly defined, generally anisotropic, matter family. The reason why this is of interest here is threefold. Firstly, as pointed out in \cite[Sec~7.1]{CalogeroHeinzle2009} and~\cite[Sec~12.1]{CalogeroHeinzle2011}, massless Vlasov matter subject to the same symmetry assumptions which we impose here naturally fits into their matter family as a special case. Secondly, in~\cite{CalogeroHeinzle2010, Heissel2012} the formalism could be extended to incorporate massive Vlasov matter as well. Thirdly, \cite{CalogeroHeinzle2011} readily lists the dynamical system for Kantowski-Sachs cosmologies in their original anisotropic matter framework. We are adopt this system here.

\subsection{Results and outline}

In this paper we give a dynamical system analysis of Kantowski-Sachs cosmologies with massive or massless Vlasov, thereby closing the literature gap identified in subsection~\ref{SS: literature}. The distribution functions considered are invariant under the symmetries of the spacetime (Definition~\ref{D: 1}), reflection symmetric (Definition~\ref{D: 2}) as well as split supported (Definition~\ref{D: 3}). Our main result (Theorem~\ref{T: 1}) is that generic solutions are past and future asymptotic to the non-flat LRS Kasner vacuum solution, and that there are non-generic solutions which are asymptotic to a non-isotropic Bianchi~I matter solution or the flat Friedman matter solution. We give the metric to all these solutions. For the case of massless particles, we thereby recover~\cite[Thm~5.1]{RendallTod1999}. The result for massive particles is new. Our results are in accordance with results on recollapse of the spacetime, cf Subsection~\ref{SS: recollapse}, we also confirm that all solutions recollapse. Finally, we establish that solutions with massive particles approach solutions with massless particles towards the singularities.

In Section~\ref{S: KS cosmologies} we give some background on Kantowski-Sachs cosmologies. Section~\ref{S: SH LRS f} outlines the assumptions imposed on the distribution functions. The dynamical system for Kantowski-Sachs cosmologies with Vlasov matter is then formulated in Section~\ref{S: DS for KS}, while Section~\ref{S: state-space} deals with the corresponding state-space. Finally, we present our results in Section~\ref{S: results}. Appendix~\ref{A: flow at infinity} is concerned with the analysis of the flow at infinity.

We will assume some familiarity with dynamical systems theory. For a background we refer to~\cite{Perko2001} or~\cite[Chap~4]{WainwrightEllis1997}. Greek indices are space-time indices, while latin indices are spatial. We use the Einstein summation convention and sum over repeated indices if not indicated otherwise.

\section{Kantowski-Sachs cosmologies}\label{S: KS cosmologies}

In this section we give some background on the class of SH cosmologies called Kantowski-Sachs cosmologies. In Subsection~\ref{SS: sym grp} we give the definition in terms of the underlying symmetry group, and discuss the structure of the associated Lie-algebra. In Subsection~\ref{SS: geom & top} we give the metric in a symmetry adapted frame, and list the topologies on which it can be realised. Finally, in Subsection~\ref{SS: recollapse} we review some results concerning the recollapse property of this class, and comment on how our results relate to those.

\subsection{Symmetry group}\label{SS: sym grp}
By a \emph{cosmology} we refer to a spacetime that solves Einstein's equations.
\begin{definition}\label{D: KS}
Kantowski-Sachs cosmologies are the class of cosmologies with a four-dimensional continuous Lie-group of isometries which acts multiply-transitively on three-dimensional spatial hypersurfaces, but which does not exhibit a three-dimensional subgroup which acts simply-transitively on them; cf~\cite{Collins1977} or~\cite[Sec~1.2.2]{WainwrightEllis1997}.
\end{definition}
By the former statement of Definition~\ref{D: KS}, Kantowski-Sachs cosmologies are SH and LRS. The latter statement of Definition~\ref{D: KS} is what distinguishes Kantowski-Sachs cosmologies from SH and LRS Bianchi cosmologies, for which such a subgroup does exist.\footnote{In contrast to Kantowski-Sachs, Bianchi cosmologies are not necessarily LRS.} The Kantowski-Sachs cosmologies however do admit a three-dimensional subgroup of isometries, which acts simply transitively on two-spheres. It can hence be identified with $SO(3,\R)$, and we denote the associated generators, ie Killing vector fields, by $\{\xi_1,\xi_2,\xi_3\}$. In addition to this spherical symmetry, Kantowski-Sachs cosmologies also exhibit a (radial) translation symmetry. Denoting the corresponding Killing vector field by $\eta$, the Lie-algebra associated with the full isometry group can be represented by $[\xi_1,\xi_2]=\xi_3$ and cyclic permutations thereof, together with $[\eta,\xi_i]=0$ $\forall\, i\in\{1,2,3\}$; cf~\cite[App~B]{Collins1977} or~\cite[App~B.2]{CalogeroHeinzle2011}.\footnote{The extended Schwarzschild spacetime also admits this symmetry inside the event-horizon -- outside the Lie-algebra is the same, but $\eta$ is timelike; cf~\cite[Sec~1]{Collins1977}.}

\subsection{Geometry and topology}\label{SS: geom & top}

A convenient choice of coframe for Kantowski-Sachs is given by a time independent covector $\hat\omega^1$ which is invariant under the radial translation symmetry, together with an arbitrary time independent orthonormal frame $\{\hat\omega^2,\hat\omega^3\}$ on the two-spheres; cf~\cite[Sec~2]{RendallTod1999}. The hat emphasises time independence. Choosing an appropriate time coordinate $t$, a general Kantowski-Sachs metric then takes the form
\begin{equation}\label{E: SH LRS metric}
^4g=-\,\mathrm dt\otimes\mathrm dt+g_{11}(t)\,\hat\omega^1\otimes\hat\omega^1+g_{22}(t)\,(\hat\omega^2\otimes\hat\omega^2+\hat\omega^3\otimes\hat\omega^3) \, .
\end{equation}
This choice is canonical in the sense that spatial homogeneity is reflected by the spatial metric depending on $t$ only, and we choose the spatial frame such, that the time dependence rests solely on the spatial metric components, while the 1-forms $\hat\omega^i$ are time-independent. Local rotational symmetry on the other hand is reflected by the spatial metric components satisfying $g_{22}(t)=g_{33}(t)$.

The spatial part of the metric~\eqref{E: SH LRS metric} can be realised on topologies of the form $\R\times S^2$, or topologies which can be derived from this by (\emph{i}) identifying points under a translation in $\eta$-direction, ie in the $\R$-part of the topology, or a translation in $\eta$-direction together with a rotation or reflection, (\emph{ii}) an identification of antipodal points in each $S^2$ or (\emph{iii}) a combination of (\emph{i}) and (\emph{ii}); cf~\cite[Sec~2]{Collins1977}. A prominent example is $S\times S^2$. Different realisations of these topologies do not effect the analysis we perform. In particular our results hold for all of those topologies.

\subsection{Recollapse}\label{SS: recollapse}

There are several results concerning recollapse, ie the presence of both, a big-bang singularity and a big-crunch singularity. \cite[Thm~2]{Collins1977} shows recollapse by geodesic incompleteness for Kantowski-Sachs cosmologies with perfect fluids. \cite[Thm~1.2]{Burnett1991} proofs recollapse by showing that the length of timelike curves in Kantowski-Sachs cosmologies is bounded from above, given that the stress tensor of the matter is positive definite, ie that it satisfies the non-negative-sum-pressures condition. This result is embedded in the more general result \cite[Thm~1.1]{Burnett1991}, which shows that there is an upper bound to the length of timelike curves in spherically symmetric spacetimes that possess $S\times S^2$ Cauchy surfaces and that satisfy the non-negative-pressure and dominant energy conditions. It thereby also includes Kantowski-Sachs cosmologies with Vlasov matter. A related recollapse result concerning Einstein-Vlasov spacetimes with spherical symmetry, but in general without additional translation symmetry, is also given in~\cite[Thm~3.11]{Henkel2002}. However, none of these results establishes the precise behaviour towards the singularities which we obtain.

\section{Spatially homogenous, locally rotationally symmetric Vlasov distributions}\label{S: SH LRS f}

The purpose of this section is to define Vlasov distributions exhibiting the symmetries of the underlying spacetime, which is what we assume in our analysis. In Subsection~\ref{SS: EVS} we briefly introduce the Einstein-Vlasov system. Subsection~\ref{SS: inheritance of sym} discusses the way in which distribution functions inherit the symmetries of the underlying spacetime. In Subsection~\ref{SS: sym inv f} we give the definition of symmetry invariant distribution functions, and discuss the resulting general functional forms of those functions int he context of SH as well as SH LRS spacetimes. Finally, in Subsection~\ref{SS: further sym} we discuss further assumptions imposed on the distribution functions.

\subsection{The Einstein-Vlasov system}\label{SS: EVS}

Consider the mass-shell
\begin{align}\notag
P_m:=\{ (x^\lambda,v^\lambda) \,|\, x^\lambda\in M , v^\lambda\in T_x M, v_\mu v^\mu=-m^2, v^0>0 \}
\end{align}
associated with a spacetime $(M,{^4g})$. Let there be an ensemble of collisionless particles of mass $m=1$ or $0$, described by the distribution function $f(x^\lambda,v^\lambda)$, where $x^\lambda$ and $v^\lambda$ stand for the particle’s positions and four-momenta with domain $P_m$. In the case $m=0$ we understand $v^i=0$ to be excluded from the domain. Since the particles are collisionless, $f$ satisfies the Vlasov equation, which in coordinates $\{t,x^i\}$ and with Christoffel symbols $\Gamma^\rho_{\mu\nu}$ reads
\begin{align}\label{E: Vlasov}
v^\mu\frac{\partial f}{\partial x^\mu} - \Gamma^k_{\mu\nu}v^\mu v^\nu\frac{\partial f}{\partial v^k} = 0 .
\end{align}
The Einstein-Vlasov system is then given by~\eqref{E: Vlasov} together with the Einstein equations $R_{\mu\nu}-\frac{1}{2}g_{\mu\nu}R=T_{\mu\nu}$, with an energy-momentum tensor of the form
\begin{align}\label{E: T Vlasov}
T^{\mu\nu} = \int f(x^\lambda,v^\lambda)\,v^\mu v^\nu\,\mathrm{vol} ,\quad\text{where}\quad
\mathrm{vol} := \frac{\sqrt{|\det{^4g}|}}{|v_0|} \, \mathrm dv^1\mathrm dv^2 \mathrm dv^3
\end{align}
denotes the volume form induced by $^4g$ on $P_m$. Hence, in~\eqref{E: T Vlasov}, $|v_0|$ is understood to be constrained by the relation
\begin{align}\notag
{^4g}(v,v) =
\begin{cases}
-1, 			& m=1 \\
\phantom{-}0,	& m=0
\end{cases} .
\end{align}
One obtains directly from the definitions~\eqref{E: Vlasov} and~\eqref{E: T Vlasov} that Vlasov matter satisfies the dominant and strong energy conditions, as well as the non-negative pressures condition; cf~\cite[Sec~1]{Rendall2004}. For more details on the Einstein-Vlasov system we refer to~\cite{Andreasson2011, Rendall2004, Rendall2008}.

\subsection{Inheritance of symmetries by the distribution function}\label{SS: inheritance of sym}

The dynamical systems approach to spatially homogenous Einstein-Vlasov cosmology, which we will use in the following, relies on symmetries obeyed by the distribution function~$f$, which, by virtue of these symmetries, then solves the Vlasov equation~\eqref{E: Vlasov} and effectively eliminates the latter from the system of equations. In the proceeding discussion, we mainly follow the work by Maartens and Maharaj \cite{MaartensMaharaj1985,MaartensMaharaj1990}.

Let the spacetime $(M,{^4g},f)$ be subject to symmetries, described by a set of Killing vector fields $X$, then
\begin{align}\label{E: Killing}
\mathcal L_\xi{^4g}=0\quad \forall\,\xi\in X \,,
\end{align}
where $\mathcal L$ denotes the Lie-derivative. In the case of Kantowski-Sachs spacetimes there are four such fields -- three corresponding to spatial homogeneity, and one to local rotational symmetry; cf \cite[Sec~1.2.2 (b)]{WainwrightEllis1997}. Through the Einstein equations, the symmetries~\eqref{E: Killing} are inherited by the energy-momentum tensor, such that\footnote{This relation also holds for homothetic vector fields, for which $\mathcal L_\xi{^4g}=\kappa$, with $\kappa=\mathrm{const}$, since the Riemann tensor is invariant with respect to a constant scaling; cf~\cite[Sec~1.2.4]{WainwrightEllis1997}.}
\begin{align}\label{E: T sym}
\mathcal L_\xi T_{\mu\nu}=0 \quad \forall\,\xi\in X \,.
\end{align}
Consequently, from~\eqref{E: T sym} with~\eqref{E: T Vlasov}, the distribution function $f$ inherits the symmetry by
\begin{align}\label{E: f sym}
\mathcal L_\xi T_{\mu\nu} = \int \tilde\xi(f)\,v^\mu v^\nu \,\mathrm{vol} = 0 \quad\forall\,\xi\in X ,
\end{align}
where
\begin{align}
\tilde\xi = \xi^i\frac{\partial}{\partial x^i}+{\xi^i}_{,j}v^j\frac{\partial}{\partial v^i}
\end{align}
denotes the complete lift of the Killing vector $\xi$ onto the tangent bundle; cf~\cite[Sec~II.C]{MaartensMaharaj1985}, \cite[Sec~III.E]{BerezdivinSachs1973}, references therein and~\cite{SarbachZannias2014}. Hence, a Vlasov matter distribution in a space-time with symmetries $X$ has to satisfy~\eqref{E: f sym}.

\subsection{Symmetry invariant distribution functions}\label{SS: sym inv f}

\begin{definition}\label{D: 1}
A Vlasov distribution functions $f$ is invariant under the symmetries of the spacetime described by a set of Killing vector fields $X$ if and only if
\begin{align}\label{E: f condition}
\tilde\xi(f) = 0 \quad \forall\,\xi\in X \,.
\end{align}
An $f$ which is invariant under spatial homogeneity (and local rotational symmetry) is called a spatially homogenous (and locally rotationally symmetric) distribution function.
\end{definition}
Clearly,~\eqref{E: f sym} follows from~\eqref{E: f condition}, but the reverse statement does not hold in general.\footnote{A counter example is presented in~\cite{EllisEtAl1983i} and also quoted in~\cite[Sec~C]{MaartensMaharaj1985}: a $k=0$ Friedman spacetime on which there lives an anisotropic $f$.} Hence, considering such distribution functions usually mean a loss of generality. In~\cite{MaartensMaharaj1985,MaartensMaharaj1990}, Maartens and Maharaj sought to find the general functional forms of SH as well as SH and LRS distribution functions which satisfy the Vlasov equation~\eqref{E: Vlasov}. The cosmologies for which they succeeded without additional assumptions, and for which the found distributions are non-trivial are~\eqref{E: SH list}, \eqref{E: SH LRS list}. Those distribution functions are:
\begin{align}
f&=f_0(v_1,v_2,v_3),			& &\text{Bianchi I;} \label{E: f0} \\
f&=f_0\big(v_1,v_2^2+v_3^2\big),	& &\text{SH LRS cosmologies listed in~\eqref{E: SH LRS list}.} \label{E: f0 LRS}
\end{align}
Here $v_i$ denote the Killing vector constants of motion corresponding to spatial homogeneity, ie the conserved momenta of the particles. Spatial homogeneity is reflected in~\eqref{E: f0}, \eqref{E: f0 LRS} by the fact that $f_0$ is independent of the spatial coordinates. In addition, local rotational symmetry is reflected in~\eqref{E: f0 LRS} by the fact that $f_0$ is independent of the direction of the particle momenta in the $2,3$-plane. Note also that the $f_0$ are time independent.

\subsection{Further symmetry assumptions}\label{SS: further sym}

The route initiated by Rendall~\cite{Rendall1996} is based on choosing Vlasov distribution functions of the form~\eqref{E: f0}, \eqref{E: f0 LRS} and to make the matter integrals tractable by imposing a further symmetry on the distribution functions -- \emph{reflection symmetry}. We will restrict here to the SH LRS case, and refer to~\cite{Rendall1996} for the analogous discussion in the context of Bianchi~I without additional LRS symmetry.
\begin{definition}\label{D: 2}
A non-trivial spatially homogenous and locally rotationally symmetric Vlasov distribution function $f_0\big(v_1,v_2^2+v_3^2\big)$ is reflection symmetric if and only if it is even in $v_1$.
\end{definition}
From~\eqref{E: T Vlasov} it then follows that the corresponding energy-momentum tensor is diagonal, and so is the metric for SH LRS spacetimes; cf~\cite{MaartensMaharaj1985}. Approaches using reflection symmetry are thus also referred to as diagonal models. We refer to~\cite{RendallTod1999, RendallUggla2000} for more details.

One further restriction sometimes imposed, and in this paper as well, is \emph{split support}; cf~\cite{RendallUggla2000}.
\begin{definition}\label{D: 3}
A spatially homogenous and locally rotationally symmetric distribution function $f_0\big(v_1,v_2^2+v_3^2\big)$ has split support if and only if its support does not intersect the coordinate planes $v^i=0$.
\end{definition}
The motivation for this is to ensure that certain matter functions are smooth, as we will see in Subsection~\ref{SS: matter functions}

\section{The dynamical system for Kantowski-Sachs cosmologies with Vlasov matter}\label{S: DS for KS}

In this section we prepare the equations for our analysis. The dynamical system for Kantowski-Sachs cosmologies with Vlasov matter is given in Subsection~\ref{SS: dynamical system}. The Vlasov matter parameters which enter the system are discussed in Subsection~\ref{SS: matter functions}.

\subsection{The dynamical system}\label{SS: dynamical system}

Using their anisotropic matter family, in~\cite[Sec~9]{CalogeroHeinzle2011} Calogero and Heinzle present the corresponding three-dimensional dynamical systems for LRS Bianchi~VIII, IX and III as well as for Kantowski-Sachs in a unified form. As stated in Subsection~\ref{SS: literature}, this matter family also encompasses the case of massless Vlasov matter, subject to the symmetry conditions which we impose here -- Definitions~\ref{D: 1}, \ref{D: 2} and~\ref{D: 3}. In the case of LRS Bianchi~IX, the extension of the system to also include the case of massive Vlasov matter could be achieved by dropping the linearity of the equation of state of the matter on the one hand, cf Subsection~\ref{SS: matter functions}, and introducing an additional variable and evolution equation to the system on the other hand; cf~\cite[Sec~3]{CalogeroHeinzle2010}. Due to the unified framework, the same generalisation applies to the case of LRS Bianchi~VIII, cf~\cite{Heissel2012}, and it is straight forward to see that this is also the case for Kantowski-Sachs.

Hence, from~\cite[Sec~9]{CalogeroHeinzle2011} and~\cite[Sec~3]{CalogeroHeinzle2010}, we can readily write down the four-dimensional Einstein-Vlasov dynamical system for Kantowski-Sachs cosmologies: It consists of the evolution equations
\begin{align} 
H_D' &= -(1-H_D^2)(q-H_D\Sigma_+) \, , \label{E: HDprime} \\
\Sigma_+' &= -(2-q)H_D\Sigma_+-(1-H_D^2)(1-\Sigma_+^2)+\Omega(w_2(l,s)-w_1(l,s)) \, , \label{E: Sigmaprime} \\
M_1' &= M_1(qH_D-4\Sigma_++(1-H_D^2)\Sigma_+) , \label{E: Mprime} \\
l' &= 2H_D l (1-l) . \label{E: lprime}
\end{align}
together with the Hamiltonian constraint
\begin{align}\label{E: hc}
\Omega &= 1-\Sigma_+^2 \, .
\end{align}

The four dynamical quantities are defined by
\begin{align}
H_D&:=-\frac{1}{3D}({k^1}_1+2{k^2}_2) \, ,   &   \Sigma_+&:=\frac{1}{3D}({k^1}_1-{k^2}_2) \, , \\
M_1&:=\frac{1}{D}\frac{\sqrt{g_{11}}}{g_{22}} \, ,         &  l&:=\frac{(\det g)^{1/3}}{1+(\det g)^{1/3}} \, , \label{E: M1 and l}
\end{align}
where ${k^i}_j$ are the components of the extrinsic curvature. $H:=DH_D$ is the Hubble scalar, ie the expansion scalar; cf~\cite[Sec~1.1.3]{WainwrightEllis1997}. The dominant variable $D$ is given by
\begin{align}\label{E: D}
D:=\sqrt{H^2+\frac{1}{3g_{22}}}>0 \, ;
\end{align}
cf also~\cite[Sec~8.5.2]{WainwrightEllis1997}. $D\Sigma_+$ is the only independent degree of freedom of the components of the shear tensor; cf~\cite[Sec~1.1.3]{WainwrightEllis1997}. $DM_1$ gives the ratio between the spatial metric components belonging to the plane of local rotational symmetry, and the (square-root of the) spatial metric component in the orthogonal direction. $l$ is a measure of a length scale via the third root of the spatial volume element $\det g$, however compactified to the range $(0,1)$. All four dynamical quantities are dimensionless. The prime denotes derivation with respect to $D$-rescaled time
\begin{align}
\tau := \int_{t_0}^t D(t)\,\mathrm dt \quad\text{with}\quad
\tau(t_0)=0 .
\end{align}

Next we specify the quantities $\Omega$, $q$ and $s$. $\Omega$ is a dimensionless measure of the energy density of the matter. More precisely, if $\rho:=T_{tt}$ denotes the energy density, where $T_{ij}$ are the components of the energy-momentum tensor, then
\begin{align}\notag
\Omega&:=\frac{\rho}{3D^2} .	&	q&:=2\Sigma_+^2+\frac{1}{2}(1+3w(l,s))\Omega
\end{align}
denotes the deceleration parameter, and it is a function of $\Sigma_+$ and the matter parameter $w(l,s)$, which we will specified in Subsection~\ref{SS: matter functions} below. Finally, $s$ is a measure of the anisotropy of the inverse metric components $g^{ii}$, which can be expressed as function of $H_D$ and $M_1$:
\begin{align}\label{E: s}
s &:= \frac{g^{22}}{\sum_i g^{ii}} = \left( 2 + \frac{3(1-H_D^2)}{M_1^2}  \right)^{-1} \in(0,1/2).
\end{align}
In Section~\ref{S: state-space} we will see that we can also think of $s$ as a rotational angle around the $H_D = \pm1$, $M_1=0$ axes in the state-space; cf Figure~\ref{F: state-space X0}.

The Vlasov matter functions $w_1(l,s)$, $w_2(l,s)$ and $w(l,s)$ are functions of $l$ and $s$, which we give in Subsection~\ref{SS: matter functions} below. We have now specified all quantities appearing in equations~\eqref{E: HDprime}--\eqref{E: hc}. Hence, these represent the reduced, four-dimensional, constrained and closed dynamical system of Kantowski-Sachs cosmologies with Vlasov matter.

Finally we note, that the dynamical system~\eqref{E: HDprime}--\eqref{E: hc} is invariant under the discrete transformation
\begin{align}\label{E: disc sym}
(\tau,H_D,\Sigma_+) \to -(\tau,H_D,\Sigma_+) \,.
\end{align}

\subsection{The Vlasov matter functions}\label{SS: matter functions}

We work with Vlasov distributions $f_0$ which are invariant under the Kantowski-Sachs symmetries in the sense of Definition~\ref{D: 1}, and hence have the general functional form~\eqref{E: f0 LRS}. In addition we require $f_0$ to be reflection symmetric; cf Definition~\ref{D: 2}. From~\eqref{E: T Vlasov} we then have a diagonal energy-momentum tensor with
\begin{align}
\rho&\phantom{:}=T_{tt} = (\det g)^{-\tfrac{1}{2}} \int f_0\, \big(m+g^{11}v_1^2+g^{22}(v_2^2+v_3^2)\big)^{\tfrac{1}{2}}\,\mathrm dv_1\mathrm dv_2\mathrm dv_3 , \label{E: rho} \\
p_i&:={T^i}_i = (\det g)^{-\tfrac{1}{2}} \int f_0\,g^{ii}v_i^2 \big(m+g^{11}v_1^2+g^{22}(v_2^2+v_3^2)\big)^{-\tfrac{1}{2}}\,\mathrm dv_1\mathrm dv_2\mathrm dv_3 , \label{E: pi}
\end{align}
without summing over $i$; cf~\cite[Sec~2]{RendallUggla2000} or~\cite[p~1248]{CalogeroHeinzle2010}. Note that these expressions are functions of the inverse metric components, and that $p_2=p_3$. We now restrict to $m=1$ and elaborate below how massless Vlasov matter still fits into the presented framework. Following~\cite{CalogeroHeinzle2010} we define dimensionless principal pressures $w_i:=p_i/\rho$ and a dimensionless isotropic pressure $w:=\sum_i w_i/3$. Note that $w_2=w_3$. Instead of by the inverse metric components, we express these as functions of $l$ and $s$ which yields
\begin{align}\label{E: wi}
w_1(l,s) = (1-l)(1-2s)F_1(l,s) \quad\text{and}\quad
w_2(l,s) = (1-l)sF_2(l,s) \, ,
\end{align}
with $F_i(l,s)$ given by
\begin{align}\notag
\frac
{\int f_0\,v_i^2\Big(l\big(s^2(1-2s)\big)^{1/3}+(1-l)\big((1-2s)v_1^2+s(v_2^2+v_3^2)\big)\Big)^{-1/2}\mathrm dv_1\mathrm dv_2\mathrm dv_3}
{\int f_0\Big(l\big(s^2(1-2s)\big)^{1/3}+(1-l)\big((1-2s)v_1^2+s(v_2^2+v_3^2)\big)\Big)^{1/2}\mathrm dv_1\mathrm dv_2\mathrm dv_3} \, .
\end{align}
To assure that these are smooth functions, we have to assume that $f_0$ has split support (Definition~\ref{D: 3}), such that the denominator of $F_i$ is strictly positive. It is the functions~\eqref{E: wi} through which the Vlasov matter enters the dynamical system~\eqref{E: HDprime}--\eqref{E: hc}.

Though we formulated the $w_i$ for $m=1$, it is straightforward to show that we recover the expressions corresponding to $m=0$ if we formally set $l=0$ in~\eqref{E: wi}; cf~\cite[Sec~12.1]{CalogeroHeinzle2011}. $l=0$ thus marks the region which corresponds to solutions with massless Vlasov matter. In Section~\ref{S: state-space} we show that this is a boundary of the state-space. From the perspective of physics this is expected since in~\eqref{E: D}, $l$ is defined via the spatial volume element $\det g$. Thus towards a singularity, ie for $l\to0$ $\Leftrightarrow$ $\det g\to0$, one would expect massive particles to pick up momentum, such that the particle four-momenta become lightlike in the limit; cf~\cite[Sec~1]{RendallTod1999}. For later use we give the values
\begin{align}
w_1(0,1/2)&=0	&	w_2(0,1/2)&=1/2	&	w(0,s)&=1/3 \label{E: w0} 
\end{align}
which follow directly from~\eqref{E: wi}.

In analogy to the preceding argument, solutions approaching a state of infinite expansion, ie for $l\to1$ $\Leftrightarrow$ $\det g\to\infty$, one would expect particles to loose momentum, and to asymptotically resemble dust, ie a perfect fluid with equation of state $p=0$; cf~\cite{RendallUggla2000}. Indeed from~\eqref{E: wi} we find $w_1(1,s)$ $=$ $w_2(1,s)$ $=$ $w(1,s)$ $=$ $0$, ie an isotropic state with pressures 0. $l=1$ thus marks the region which corresponds to dust solutions. As we will see in Section~\ref{S: state-space} this is also a boundary of the state-space.

\section{The state-space}\label{S: state-space}

This section discusses the Kantowski-Sachs state-space and its features. In Subsection~\ref{SS: state-space X} we give the state-space corresponding to the dynamical system discussed in Section~\ref{S: DS for KS}. Subsection~\ref{SS: challenges} discusses two drawbacks: it reaches out to infinity in one coordinate and the dynamical system does not yield a smooth extension onto one of its boundary subsets. We point out how we deal with these issues, and define alternative coordinates in Subsection~\ref{SS: state-space Y}

\subsection{The state-space $\mathcal X$}\label{SS: state-space X}
To determine the state-space, we have to take into account the constraints which are imposed onto the dynamical quantities. Firstly, the range of $H_D$ follows directly from the definition of $D$, cf~\eqref{E: D}. Secondly, together with the requirement that the energy density should be positive, ie $\Omega>0$, the Hamiltonian constraint~\eqref{E: hc} determines the range of the shear parameter $\Sigma_+$. Finally, the ranges of $M_1$ and $l$ follow directly from their definitions~\eqref{E: M1 and l}, and in the case of $M_1$ also using the positivity of $D$, cf~\eqref{E: D}.

Let $x:=(H_D,\Sigma_+,M_1,l)$. The Kantowski-Sachs state-space is given by
\begin{align}\label{E: state-space}
\mathcal X:=\{x\in\R^4 | H_D\in(-1,1),\Sigma_+\in(-1,1),M_1>0,l\in(0,1)\}
\end{align}
and we denote its relevant boundary subsets as follows:
\begin{align}
\mathcal X_{i} &:= \{ x\in\partial\mathcal X | l=i \}, \label{E: X0} \quad i=0,1 \\
\mathcal X^\pm &:= \{ x\in\partial\mathcal X | H_D=\pm1 \} \notag \\
\mathcal X_{i}^\pm &:= \{ x\in\partial\mathcal X | H_D=\pm1, l=i \}, \notag \quad i=0,1\\
\mathcal I^\pm &:= \{ x\in\partial{\mathcal X} | H_D=\pm1, M_1=0 \} \notag \\
\mathcal I_0^\pm &:= \{ x\in\partial{\mathcal X} | H_D=\pm1, M_1=0, l=0 \notag \}
\end{align}
In this notation we follow the scheme that a subscript $0$ or $1$ refers to a restriction to $l=0$  or $1$, and a superscript $\pm$ to a restriction to $H_D=\pm1$. $\mathcal X$ corresponds to solutions with massive Vlasov matter. In Subsection~\ref{SS: matter functions} we established that $\mathcal X_0$ corresponds to massless Vlasov matter solutions and $\mathcal X_1$ to dust solutions. $\mathcal X_0$ is depicted in Figure~\ref{F: state-space X0}. From the Hamiltonian constraint~\eqref{E: hc} we see that the $\Sigma_+=\pm1$ boundaries correspond to vacuum solutions. Solutions in $\mathcal I^\pm$ are of LRS Bianchi type~I; cf~\cite{CalogeroHeinzle2010,CalogeroHeinzle2011}.
\begin{figure}
\tdplotsetmaincoords{70}{-38}
\begin{tikzpicture}[scale=3.3,tdplot_main_coords]
	
	\draw[dotted] (0,-1,0) -- (0,-1,3) -- (0,1,3);
	\draw[dotted] (0,1,3) -- (0,1,0) -- (0,-1,0);
	\fill[color=blue, opacity=.05] (0,-1,0) -- (0,-1,3) -- (0,1,3) -- (0,1,0) -- cycle;
	
	\draw[->] (-.8,-1.2,-.2) -- (.8,-1.2,-.2);
	\node at (-.8,-1.3,-.3) {$\scriptstyle -1$};
	\node at (.8,-1.3,-.3) {$\scriptstyle 1$};
	\node at (0,-1.3,-.3) {$\scriptstyle H_D$};
	\draw[->] (1.2,-1.2,.2) -- (1.2,-1.2,2.8);
	\node at (1.3,-1.3,.2) {$\scriptstyle 0$};
	\node at (1.3,-1.3,2.8) {$\scriptstyle\infty$};
	\node at (1.3,-1.3,1.5) {$\scriptstyle M_1$};
	\draw[->] (-1.2,-.8,-.2) -- (-1.2,.8,-.2);
	\node at (-1.3,-.8,-.3) {$\scriptstyle-1$};
	\node at (-1.3,.8,-.3) {$\scriptstyle1$};
	\node at (-1.3,0,-.3) {$\scriptstyle\Sigma_+$};
	\draw[dotted] (1,-1.05,0) -- (1,-2.1,0);
	\draw (1,-2,-.1) -- (1,-2,.3);
	\draw (1.1,-2,0) -- (.7,-2,0);
	\draw[->] (.8,-2,0) .. controls+(0,0,.12) and+(-.12,0,0) .. (1,-2,.2);
	\node at (.8,-2.1,0) {$\scriptstyle0$};
	\node at (.8,-1.9,0) {$\scriptstyle0$};
	\node at (1,-2.13,.2) {$\scriptstyle\pi/2$};
	\node at (1,-1.87,.2) {$\scriptstyle1/2$};
	\node at (.86,-2.1,.14) {$\scriptstyle\theta$};
	\node at (.86,-1.9,.14) {$\scriptstyle s$};
	
	\node at (1.1,-1,0) {$\scriptstyle T$};
	\node at (1,1,-.1) {$\scriptstyle Q$};
	\node at (1,.5,-.1) {$\scriptstyle R_\sharp$};
	\node at (1.1,-1.1,3.1) {$\scriptstyle T_\infty$};
	\node at (1.1,1.1,3.1) {$\scriptstyle Q_\infty$};
	\node at (1.1,.5,3.1) {$\scriptstyle R_\infty$};
	\node at (-1.1,1.1,-.1) {$\scriptstyle\tilde T$};
	\node at (-1.1,-1.1,-.1) {$\scriptstyle\tilde Q$};
	\node at (-1,-.5,-.1) {$\scriptstyle\tilde R_\sharp$};
	\node at (-1.1,1.1,3.1) {$\scriptstyle\tilde T_\infty$};
	\node at (-1,-1,3.1) {$\scriptstyle\tilde Q_\infty$};
	\node at (-1,-.5,3.1) {$\scriptstyle\tilde R_\infty$};
	
	\node at (-1.2,.5,3.5) {$\scriptstyle\mathcal X_0^-$};
	\draw (-1.2,.5,3.4) .. controls+(0,0,-.3) and+(-.3,0,0) .. (-1,.5,2.8);
	\node at (1.2,.2,3.3) {$\scriptstyle\mathcal X_0^+$};
	\draw (1.2,.2,3.2) .. controls+(0,0,-.2) and+(.2,0,0) .. (1.1,.2,2.8);
	\node at (-1.08,.15,-.08) {$\scriptstyle\mathcal I_0^+$};
	\draw (-1.08,.31,-.08) .. controls+(0,.2,0) and+(-.2,0,0) .. (-1.05,.5,0);
	\node at (.92,.-.45,-.08) {$\scriptstyle\mathcal I_0^+$};
	\draw (.92,-.29,-.08) .. controls+(0,.2,0) and+(-.2,0,0) .. (.92,-.1,0);
	\node at (-.2,.5,3.5) {$\scriptstyle H_D=0$};
	\draw (-.2,.5,3.4) .. controls+(0,0,-.2) and+(-.2,0,0) .. (0,.5,2.8);
	
	\draw (-1,1,0) -- (-1,1,1.5);\draw[<-] (-1,1,1.5) -- (-1,1,3);
	\draw (-1,1,3) -- (-1,.25,3);\draw[<-] (-1,.25,3) -- (-1,-.5,3);
	\draw (-1,-1,3) -- (-1,-.75,3);\draw[<-] (-1,-.75,3) -- (-1,-.5,3);
	\draw (-1,-1,3) -- (-1,-1,1.5);\draw[<-] (-1,-1,1.5) -- (-1,-1,0);
	\draw[thick, dotted] (-1,1,0) -- (-1,-1,0);
	\draw[dashed] (-1,-.5,3) -- (-1,-.5,1.5);
	\draw[<-,dashed] (-1,-.5,1.5) -- (-1,-.5,0);
	\draw[-<] (-1,-1,3) .. controls+(0,.3,-.1) and+(0,-.02,.4) .. (-1,-.75,1.5);
	\draw (-1,-.75,1.5) .. controls+(0,.02,.4) and+(0,-.3,.1) .. (-1,-.5,0);
	\draw[-<] (-1,1,0) .. controls+(0,-.5,.3) and+(0,.8,0) .. (-1,.25,2.5);
	\draw (-1,.25,2.5) .. controls+(0,-.8,0) and+(0,.5,.3) .. (-1,-.5,0);
	\draw[-<] (-1,1,0) .. controls+(0,-.8,.1) and+(0,.5,0) .. (-1,.25,1.5);
	\draw (-1,.25,1.5) .. controls+(0,-.5,0) and+(0,.8,.1) .. (-1,-.5,0);
	
	\draw[->,help lines] (1,1,0) -- (0,1,0);\draw[help lines] (0,1,0) -- (-1,1,0);
	\draw[->] (1,-1,0) -- (0,-1,0);\draw (0,-1,0) -- (-1,-1,0);
	\draw[->] (1,1,3) -- (0,1,3);\draw (0,1,3) -- (-1,1,3);
	\draw[->] (1,-1,3) -- (0,-1,3);\draw (0,-1,3) -- (-1,-1,3);
	
	\draw[->] (1,-1,0) -- (1,-1,1.5);\draw (1,-1,1.5) -- (1,-1,3);
	\draw[->] (1,-1,3) -- (1,-.25,3);\draw (1,-.25,3) -- (1,.5,3);
	\draw[->] (1,1,3) -- (1,.75,3);\draw (1,.75,3) -- (1,.5,3);
	\draw[->,help lines] (1,1,3) -- (1,1,1.5);\draw[help lines] (1,1,1.5) -- (1,1,0);
	\draw[help lines, thick, dotted] (1,1,0) -- (1,-1,0);
	\draw[->,dashed,help lines] (1,.5,3) -- (1,.5,1.5);
	\draw[dashed,help lines] (1,.5,1.5) -- (1,.5,0);
	\draw[->,help lines] (1,1,3) .. controls+(0,-.3,-.1) and+(0,.02,.4) .. (1,.75,1.5);
	\draw[help lines] (1,.75,1.5) .. controls+(0,-.02,-.4) and+(0,.3,.1) .. (1,.5,0);
	\draw[->,help lines] (1,-1,0) .. controls+(0,.5,.3) and+(0,-.8,0) .. (1,-.25,2.5);
	\draw[help lines] (1,-.25,2.5) .. controls+(0,.8,0) and+(0,-.5,.3) .. (1,.5,0);
	\draw[->,help lines] (1,-1,0) .. controls+(0,.8,.1) and+(0,-.5,0) .. (1,-.25,1.5);
	\draw[help lines] (1,-.25,1.5) .. controls+(0,.5,0) and+(0,-.8,.1) .. (1,.5,0);
	
\end{tikzpicture}
\caption{The three-dimensional $l=0$ boundary $\mathcal X_0$ of the four-dimensional state-space $\mathcal X$. Note that $M_1$ in unbounded. Solutions is $\mathcal X_0$ represent solutions with massless Vlasov matter.}
\label{F: state-space X0}
\end{figure}
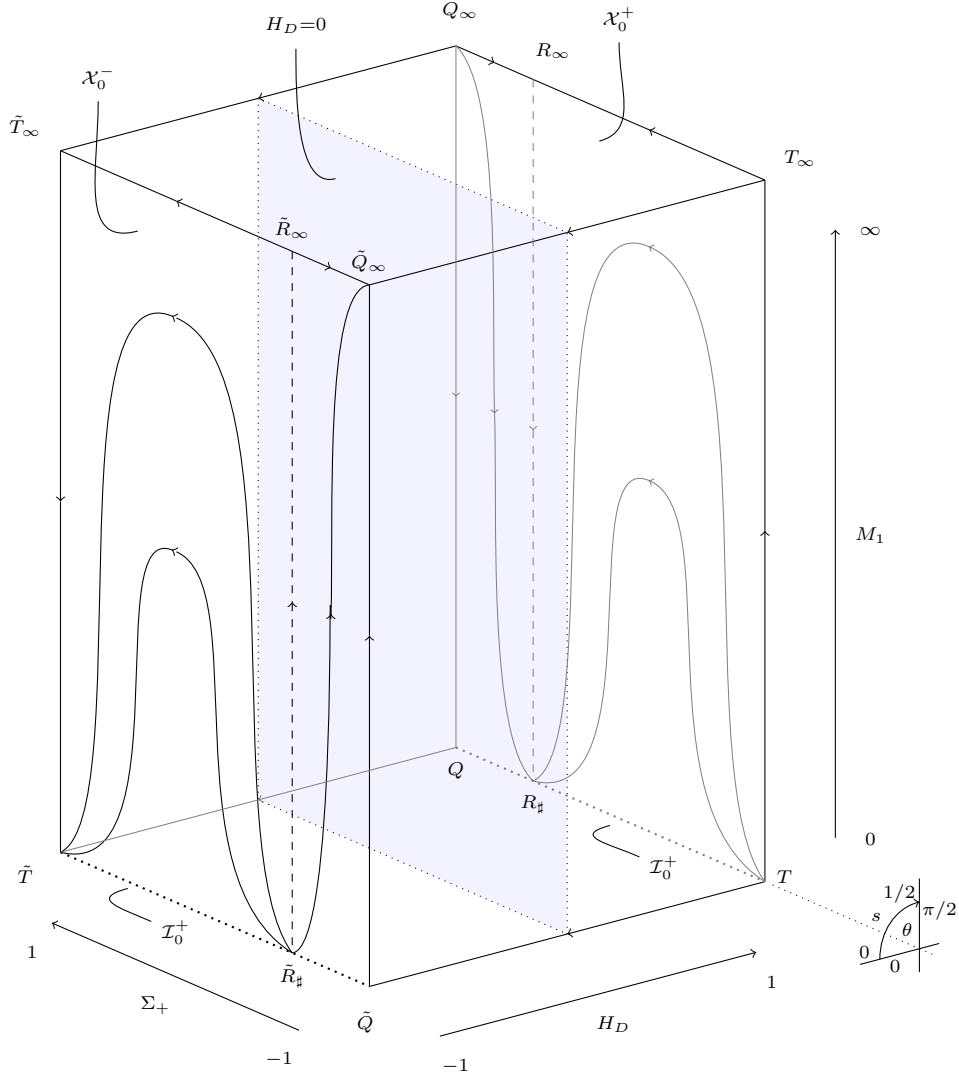 

\subsection{Main challenges}\label{SS: challenges}

There are two main challenges with the system~\eqref{E: HDprime}--\eqref{E: hc} on the state-space $\mathcal X$. Firstly, $\mathcal X$ is unbounded in $M_1$-direction. Thus we need to perform a careful analysis of the \emph{flow at infinity}; cf~\cite[Sec~3.10]{Perko2001}. In analogy to the approach taken in~\cite{Heissel2012}, we achieve this by a formal compactification via a projection of the flow onto the `Poincaré cylinder'; cf Section~\ref{S: results}, together with Appendix~\ref{A: flow at infinity}. Secondly, the dynamical system~\eqref{E: HDprime}--\eqref{E: hc} does not have a smooth extension onto $\mathcal I^\pm$. This is because the limit of $s$ for $H_D\to\pm1$ and $M_1\to0$ simultaneously does not exist; cf~\eqref{E: s}. Note however that each of the limits exists separately, so there is a smooth extension onto $\mathcal I^\pm$ on the boundary $\mathcal X^\pm$, and on the $M_1=0$ boundary. In the same way as in~\cite{CalogeroHeinzle2010, CalogeroHeinzle2011, Heissel2012} we deal with this issue by introducing polar coordinates centered at $H_D=\pm1$, $M_1=0$. In the new coordinates a smooth extension of the system onto the corresponding boundary then does exist; cf Subsection~\ref{SS: state-space Y}.

\subsection{The state-space $\mathcal Y$}\label{SS: state-space Y}

We introduce new coordinates $y:=(r,\theta,\Sigma_+,l)$ in order to analyse the solutions in a neighbourhood of $\mathcal I^\pm$. The coordinate transformation is given by
\begin{align}\label{E: coord trafo}
\frac{1-H_D^2}{2} = r\cos\theta	\qquad\text{and}\qquad		\frac{M_1^2}{3} &= r\sin\theta \,,
\end{align}
while $\Sigma_+$ and $l$ stay unchanged, and we choose $H_D>0$. These are polar coordinates centered at $\mathcal I^+$ which cover the $H_D>0$ region of $\mathcal X$; cf Figure~\ref{F: state-space X0}. The solutions in a neighbourhood of $\mathcal I^-$ then follow from the discrete symmetry~\eqref{E: disc sym}. With~\eqref{E: coord trafo}, \eqref{E: s} defines a bijection $s(\theta)$ which maps $(0,\pi)$ onto $(0,1/2)$. Hence we can identify the two.

We denote the state-space in the new coordinates by
\begin{align}\notag
\mathcal Y := \big\{ y\in\R^4 | r\in\big(0,\tfrac{1}{2\cos\theta}\big), \theta\in(0,\pi),\Sigma_+\in(-1,1),l\in(0,1) \big\} 
\end{align}
and denote its relevant boundary subsets as follows:
\begin{align*}
\mathcal Y_0 &:= \{ y\in\partial\mathcal Y | l=0 \} \\
\mathcal S &:= \{ y\in\partial\mathcal Y | \theta=\pi/2 \} &
\mathcal S_0 &:= \{ y\in\partial\mathcal Y | l=0, \theta=\pi/2 \} \\
\mathcal I^{\mathcal Y} &:= \{ y\in\partial\mathcal Y | r=0 \} &
\mathcal I_0^{\mathcal Y} &:= \{ y\in\partial\mathcal Y | l=0, r=0 \}
\end{align*}
From~\eqref{E: coord trafo} we have the following identificatons ($\sim$) between subsets of $\overline{\mathcal X}$ and $\overline{\mathcal Y}$:
\begin{align}\label{E: correspondences}
\mathcal Y&\sim\mathcal X\text{ for } H_D>0 , &
\overline{\mathcal S}\cup\overline{\mathcal I^\mathcal Y}&\sim\overline{\mathcal X^+}, &
\overline{\mathcal I^\mathcal Y}&\sim\overline{\mathcal I^+} ,
\end{align}
and equivalently for the respective subsets with subscript $0$. Note that the Bianchi~I boundary $\mathcal I^\mathcal Y$ is of one dimension higher than its counterpart in the $x$~coordinates. Because of this degeneracy this correspondence is not a diffeomorphism. However, \eqref{E: coord trafo} defines a diffeomorphism ($\cong$) between $\overline{\mathcal Y}\backslash\overline{\mathcal I^\mathcal Y}$ and $\overline{\mathcal X}\backslash\overline{\mathcal I^+}$ for $H_D>0$. Consequently the flows in these sets are topologically equivalent. In particular
\begin{align}\label{E: diffeo}
\mathcal S_0\cong\mathcal X_0^+ ,
\end{align}
so we can choose to work in either coordinate system to analyse the flow in these subsets.

While the $y$~coordinates allow us to analyse the flow in a neighbourhood of the Bianchi~I boundary, they have two drawbacks. Firstly, they only cover the $H_D>0$ region of the full state-space. Though by the discrete symmetry~\eqref{E: disc sym} this then also covers the $H_D<0$ region, it does not cover the $H_D=0$ region. Secondly, in the $y$~coordinates the region where $M_1\to\infty$ is of one dimension lower and thus degenerate. Hence, the $y$~coordinates are not suited as a basis to analyse the flow at infinity. In our analysis we thus only use the $y$ coordinates to analyse the flow in $\mathcal I_0^\mathcal Y$.
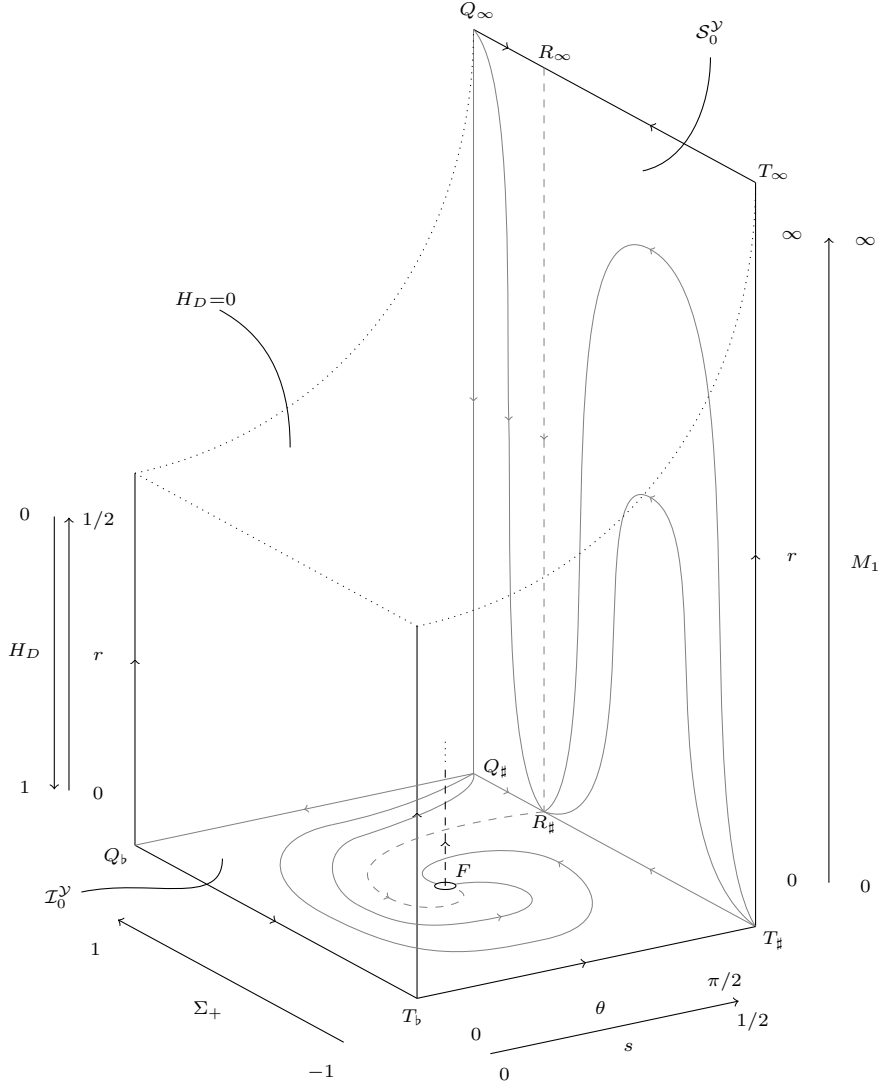
\begin{figure}
\tdplotsetmaincoords{70}{-32}
\begin{tikzpicture}[scale=3.5,tdplot_main_coords]
	
	\draw[->] (0,-1,0) -- (0,-1,1.5);\draw (0,-1,1.5) -- (0,-1,3);
	\draw[->] (0,-1,3) -- (0,-.25,3);\draw (0,-.25,3) -- (0,.5,3);
	\draw[->] (0,1,3) -- (0,.75,3);\draw (0,.75,3) -- (0,.5,3);
	\draw[->,help lines] (0,1,3) -- (0,1,1.5);\draw[help lines] (0,1,1.5) -- (0,1,0);
	\draw[->,help lines] (0,1,0) -- (0,.75,0);\draw[help lines] (0,.75,0) -- (0,.5,0);
	\draw[->,help lines] (0,-1,0) -- (0,-.25,0);\draw[help lines] (0,-.25,0) -- (0,.5,0);
	\draw[->,dashed,help lines] (0,.5,3) -- (0,.5,1.5);
	\draw[dashed,help lines] (0,.5,1.5) -- (0,.5,0);
	\draw[->,help lines] (0,1,3) .. controls+(0,-.3,-.1) and+(0,.02,.4) .. (0,.75,1.5);
	\draw[help lines] (0,.75,1.5) .. controls+(0,-.02,-.4) and+(0,.3,.1) .. (0,.5,0);
	\draw[->,help lines] (0,-1,0) .. controls+(0,.5,.3) and+(0,-.8,0) .. (0,-.25,2.5);
	\draw[help lines] (0,-.25,2.5) .. controls+(0,.8,0) and+(0,-.5,.3) .. (0,.5,0);
	\draw[->,help lines] (0,-1,0) .. controls+(0,.8,.1) and+(0,-.5,0) .. (0,-.25,1.5);
	\draw[help lines] (0,-.25,1.5) .. controls+(0,.5,0) and+(0,-.8,.1) .. (0,.5,0);
	
	\draw[->] (-1.5,-1,0) -- (-1.5,-1,.75);\draw (-1.5,-1,.75) -- (-1.5,-1,1.5);
	\draw[->] (-1.5,1,0) -- (-1.5,1,.75);\draw (-1.5,1,.75) -- (-1.5,1,1.5);
	\draw[->] (-1.5,1,0) -- (-1.5,0,0);\draw (-1.5,0,0) -- (-1.5,-1,0);
	\draw[dotted] (-1.5,-1,1.5) -- (-1.5,1,1.5);
	
	\draw[dotted] (-1.5,-1,1.5) .. controls+(.9,0,0) and+(0,0,-.9) .. (0,-1,3);
	\draw[dotted] (-1.5,1,1.5) .. controls+(.9,0,0) and+(0,0,-.9) .. (0,1,3);
	
	\draw[->,help lines] (0,1,0) -- (-.75,1,0); \draw[help lines] (-.75,1,0) -- (-1.5,1,0);
	\draw[->] (-1.5,-1,0) -- (-.75,-1,0); \draw (-.75,-1,0) -- (0,-1,0);
	\draw[->,dashed, help lines] (0,.5,0) .. controls+(-.4,.1,0) and+(0,.5,0) .. (-1,0,0);
	\draw[dashed, help lines] (-1,0,0) .. controls+(0,-.3,0) and+(.1,-.2,0) .. (-.75,0,0);
	\draw[help lines] (0,1,0) .. controls+(-.2,-.1,0) and+(.3,0,0) .. (-.75,.8,0);
	\draw[help lines] (-.75,.8,0) .. controls+(-.3,0,0) and+(0,.4,0) .. (-1.3,0,0);
	\draw[help lines] (-1.3,0,0) .. controls+(0,-.7,0) and+(-.3,0,0) .. (-.75,-.7,0);
	\draw[->,help lines] (-.75,-.7,0) .. controls+(.3,0,0) and+(0,-.3,0) .. (-.25,0,0);
	\draw[help lines] (-.25,0,0) .. controls+(0,.4,0) and+(-.15,.3,0) .. (-.75,0,0);
	\draw[help lines] (0,1,0) .. controls+(-.1,-.2,0) and+(.3,.06,0) .. (-.75,.63,0);
	\draw[help lines] (-.75,.63,0) .. controls+(-.2,-.06,0) and+(0,.2,0) .. (-1.14,0,0);
	\draw[->, help lines] (-1.14,0,0) .. controls+(0,-.3,0) and+(-.3,0,0) .. (-.75,-.4,0);
	\draw[help lines] (-.75,-.4,0) .. controls+(.3,0,0) and+(.3,0,0) .. (-.75,0,0);
	
	\draw[fill=white] (-.75,0,0) circle (.04);
	\node at (-.75,0,0) [above right] {$\scriptstyle F$};
	\draw[->, dashed] (-.75,0,0) -- (-.75,0,.18);
	\draw[dashed] (-.75,0,.18) -- (-.75,0,.5);
	\draw[dotted] (-.75,0,.5) -- (-.75,0,.6);
	\node at (-1.55,-1.05,-.05) {$\scriptstyle T_\flat$};
	\node at (.05,-1.05,-.05) {$\scriptstyle T_\sharp$};
	\node at (-1.55,1.05,-.05) {$\scriptstyle Q_\flat$};
	\node at (.1,1,0) {$\scriptstyle Q_\sharp$};
	\node at (0,.5,-.05) {$\scriptstyle R_\sharp$};
	\node at (.05,-1.05,3.05) {$\scriptstyle T_\infty$};
	\node at (.05,1.05,3.05) {$\scriptstyle Q_\infty$};
	\node at (.05,.5,3.05) {$\scriptstyle R_\infty$};
	
	\draw[->] (-1.3,-1.2,-.2) -- (-.2,-1.2,-.2);
	\node at (-1.3,-1.1,-.15) {$\scriptstyle0$};
	\node at (-1.3,-1.3,-.25) {$\scriptstyle0$};
	\node at (-.75,-1.1,-.15) {$\scriptstyle\theta$};
	\node at (-.75,-1.3,-.25) {$\scriptstyle s$};
	\node at (-.2,-1.1,-.15) {$\scriptstyle\pi/2$};
	\node at (-.2,-1.3,-.25) {$\scriptstyle1/2$};
	\draw[->] (.2,-1.2,.2) -- (.2,-1.2,2.8);
	\node at (.1,-1.1,.2) {$\scriptstyle0$};
	\node at (.3,-1.3,.2) {$\scriptstyle0$};
	\node at (.1,-1.1,1.5) {$\scriptstyle r$};
	\node at (.3,-1.3,1.5) {$\scriptstyle M_1$};
	\node at (.1,-1.1,2.8) {$\scriptstyle\infty$};
	\node at (.3,-1.3,2.8) {$\scriptstyle\infty$};
	\draw[->] (-1.7,-.8,-.2) -- (-1.7,.8,-.2);
	\node at (-1.8,-.8,-.3) {$\scriptstyle-1$};
	\node at (-1.8,0,-.3) {$\scriptstyle\Sigma_+$};
	\node at (-1.8,.8,-.3) {$\scriptstyle1$};
	\draw[->] (-1.68,1.18,.2) -- (-1.68,1.18,1.3);
	\draw[<-] (-1.72,1.22,.2) -- (-1.72,1.22,1.3);
	\node at (-1.8,1.3,.2) {$\scriptstyle1$};
	\node at (-1.6,1.1,.2) {$\scriptstyle0$};
	\node at (-1.8,1.3,.75) {$\scriptstyle H_D$};
	\node at (-1.6,1.1,.75) {$\scriptstyle r$};
	\node at (-1.8,1.3,1.3) {$\scriptstyle0$};
	\node at (-1.6,1.1,1.3) {$\scriptstyle1/2$};
	
	\node at (.3,-.2,3.3) {$\scriptstyle\mathcal S_0^{\mathcal Y}$};
	\draw (.3,-.2,3.2) .. controls+(0,0,-.2) and+(.2,0,0) .. (0,-.2,2.8);
	\node at (-1,1.3,2.01) {$\scriptstyle H_D=0$};
	\draw (-1,1.2,2) .. controls+(0,-.3,0) and+(0,0,.3) .. (-1,.7,1.6);
	\node at (-1.9,.9,-.1) {$\scriptstyle\mathcal I_0^\mathcal Y$};
	\draw (-1.8,.9,-.1) .. controls+(.3,0,0) and+(0,0,-.2) .. (-1.3,.7,0);
	
\end{tikzpicture}
\caption{The three-dimensional $l=0$ boundary $\mathcal Y_0$ of the four-dimensional state-space $\mathcal Y$. It corresponds to the $H_D>0$ region in Figure~\ref{F: state-space X0}. The $H_D=0$ surface is in this sense not a boundary of the Kantowski-Sachs state-space, but rather marks the end of the $y$ coordinate patch.}
\label{F: state-space Y0}
\end{figure} 

\section{Results}\label{S: results}

Subsection~\ref{SS: main results} we formulate our main theorem, which gives the past and future asymptotic solutions, and we formulate four lemmas by which we proof it. The proofs of the lemmas are given in Subsections~\ref{SS: proof L1}--\ref{SS: proof L4}. Subsections~\ref{SS: proof L1} and~\ref{SS: proof L2} also contain some physical interpretation concerning recollapse and the approach of massless Vlasov solutions by massive ones towards the singularities. In this section we use standard terminology of dynamical systems theory. We refer to ~\cite[Chap~4]{WainwrightEllis1997} and~\cite{Perko2001} for a background.

\subsection{Main theorem}\label{SS: main results}

\begin{theorem}\label{T: 1}
Consider Kantowski-Sachs cosmologies (Definition~\ref{D: KS}) with Vlasov matter of massive or massless particles, with distribution functions satisfying the following assumptions,
\begin{enumerate}[label=\emph{(}\roman*\emph{)}]
\item invariance under the Kantowski-Sachs symmetries (Definition~\ref{D: 1}),
\item reflection symmetry (Definition~\ref{D: 2}),
\item split support (Definition~\ref{D: 3}).
\end{enumerate}
The following statements hold:
\begin{enumerate}[label=\emph{(}\alph*\emph{)}]
\item Generic solutions are past and future asymptotic to the non-flat LRS Kasner vacuum solution given by the metric~\eqref{E: SH LRS metric} with
\begin{align}\notag
g_{11}(t)=at^{-2/3} \quad\text{and}\quad
g_{22}(t)=bt^{4/3},
\end{align}
where $a$, $b$ are positive constants, and where the time is shifted such that the big-bang (big-crunch) occurs at $t=0$.
\item There are non-generic solutions which are past (future) asymptotic to a non-isotropic Bianchi~I matter solution.
\item There are non-generic solutions which are past (future) asymptotic to the flat Friedman matter solution.
\end{enumerate}
The metric to these solutions is given by~\eqref{E: SH LRS metric} with components listed in Table~\ref{Tbl: 1}.
\end{theorem}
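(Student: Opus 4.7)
The plan is to carry out a full dynamical systems analysis of the four-dimensional system \eqref{E: HDprime}--\eqref{E: hc} on the closure of $\mathcal{X}$, supplemented by the polar chart $\mathcal{Y}$ near $\mathcal{I}^\pm$ and by a Poincaré-type compactification in the $M_1$-direction. The theorem will then follow in four steps: (i) locate every equilibrium on the boundary of the state-space; (ii) determine the local stable/unstable manifolds of each equilibrium by linearisation; (iii) exclude interior $\alpha$- and $\omega$-limit sets via a monotonicity argument; and (iv) translate each asymptotic dynamical state back into an explicit metric via \eqref{E: M1 and l} and \eqref{E: D} together with the relation between $\tau$ and $t$.

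For step (i), the subsets $\mathcal{X}_0$, $\mathcal{X}_1$, $\Sigma_+=\pm 1$, $\mathcal{I}^\pm$, and $\mathcal{X}^\pm$ are invariant, as read off from the right-hand sides of \eqref{E: HDprime}--\eqref{E: lprime} and from their physical interpretations (massless, dust, vacuum, Bianchi~I, recollapse boundary). The candidate generic attractors therefore sit at the intersections of these invariant sets, i.e.\ at the corners of Figure~\ref{F: state-space X0}. In particular one expects the LRS Kasner point at $(H_D,\Sigma_+)=(\pm 1,-1/2)$ (producing Kasner exponents $(-1/3,2/3,2/3)$), together with a flat Friedman point at $\Sigma_+=0$ on $\mathcal{X}^\pm$ and non-isotropic Bianchi~I matter points inside $\mathcal{I}^\pm$. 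Linearisation at each candidate will yield eigenvalues; by the discrete symmetry \eqref{E: disc sym} it suffices to analyse the $H_D>0$ half, with the $H_D<0$ half obtained by reversing $\tau$ so that past/future asymptotics interchange.

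For step (iii), the natural monotonic-function candidates are $1-H_D^2$ and $l$. Neither is globally monotone, since Kantowski-Sachs recollapse forces $H_D$ to change sign; however, \eqref{E: lprime} shows that $l$ is strictly monotone on each of the two epochs $H_D>0$ and $H_D<0$, which cleanly decouples the past and future asymptotic problems. Within each epoch I would build a Lyapunov-type quantity out of $1-\Sigma_+^2$, $M_1$ and the matter functions $w_1(l,s)$, $w_2(l,s)$, whose derivative along the flow has a definite sign off the equilibria; the monotonicity principle then forces all interior orbits to accumulate on $\partial\mathcal{X}$ asymptotically. Combined with the reduced flows on $\mathcal{X}_0$ depicted in Figure~\ref{F: state-space X0} and on the invariant sets $\mathcal{I}^\pm$ and $\Sigma_+=\pm1$, this pushes generic orbits into the Kasner sink/source, while the one- and zero-dimensional stable manifolds of the other fixed points deliver the non-generic Bianchi~I and Friedman limits of parts~(b) and~(c).

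The hardest steps will be the failure of smoothness of the system on $\mathcal{I}^\pm$ (because $s$ in \eqref{E: s} has no joint limit at $H_D=\pm 1$, $M_1=0$) and the non-compactness in the $M_1$-direction. For the former I would switch to the coordinates $y=(r,\theta,\Sigma_+,l)$ of Subsection~\ref{SS: state-space Y}, in which the system extends smoothly onto $\mathcal{I}^\mathcal{Y}$, classify the equilibria there, and transfer the conclusions back to $\mathcal{I}^\pm$ using the correspondences \eqref{E: correspondences}, remembering the one-dimensional degeneracy. For the latter I would follow the Poincaré compactification of Appendix~\ref{A: flow at infinity} to study the equilibria labelled $T_\infty,Q_\infty,R_\infty$ in Figure~\ref{F: state-space X0} and show that no generic orbit is attracted to them. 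Once every asymptotic state has been identified with a specific point in $\overline{\mathcal{X}}$, substituting the corresponding values into \eqref{E: M1 and l} and \eqref{E: D} and integrating $\mathrm dt = D^{-1}\mathrm d\tau$ yields the explicit metric components collected in Table~\ref{Tbl: 1}, completing the theorem.
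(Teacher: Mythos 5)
Your overall strategy---boundary equilibria, linearisation, exclusion of interior limit sets by monotone functions, the polar chart near $\mathcal I^\pm$, the Poincar\'e cylinder in the $M_1$-direction, and the translation back to metrics via \cite[App~A]{CalogeroHeinzle2011}-type formulas---is the same architecture the paper uses. But two of your concrete steps would fail as stated. First, you propose to ``show that no generic orbit is attracted to'' $T_\infty,Q_\infty,R_\infty$. This is the opposite of what is true: the generic past attractor \emph{is} $Q_\infty$, i.e.\ generic orbits satisfy $M_1\to\infty$ towards the singularities (Lemma~\ref{L: 3}), and the non-generic Bianchi~I limit of part~(b) is likewise realised at $R_\infty$. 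The compactification is needed precisely to identify these points at infinity as the attractors, not to rule them out; an attempted proof that generic orbits stay at finite $M_1$ cannot succeed. Second, you place the non-flat LRS Kasner point at $(H_D,\Sigma_+)=(\pm1,-1/2)$. In the variables of this paper the Hamiltonian constraint reads $\Omega=1-\Sigma_+^2$, so a vacuum point must have $\Sigma_+=\pm1$; the point $\Sigma_+=1/2$ (not $-1/2$) is the \emph{matter} point $R$ with $\Omega=3/4$, and the non-flat Kasner point $Q$ sits at $\Sigma_+=1$. You appear to be importing the location of the LRS points on the full Bianchi Kasner circle in Hubble-normalised $(\Sigma_+,\Sigma_-)$ variables, which does not apply here.

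There is also a softer gap in your step (iii). You correctly observe that $1-H_D^2$ is not monotone, but the paper's mechanism is that $H_D$ \emph{itself} is strictly monotonically decreasing on $\overline{\mathcal X}\setminus\overline{\mathcal X^\pm}$ (this follows from $w\geq0$, since then $q-H_D\Sigma_+=2\Sigma_+^2+\tfrac12(1+3w)(1-\Sigma_+^2)-H_D\Sigma_+>0$), which immediately forces $\alpha(\Gamma)\subseteq\overline{\mathcal X^+}$ and $\omega(\Gamma)\subseteq\overline{\mathcal X^-}$; the monotonicity of $l$ then does the rest. Your proposed ``Lyapunov-type quantity built from $1-\Sigma_+^2$, $M_1$ and the matter functions'' is left unconstructed and is not needed: once the limit sets are confined to $\overline{\mathcal X_0^\pm}$ (including infinity), the remaining reduced flow on $\mathcal X_0^+$ is two-dimensional with the explicit separatrix $\Sigma_+=1/2$ and monotone $\Sigma_+$ on either side, so no further Lyapunov function is required. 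You should replace the speculative Lyapunov step by the direct monotonicity of $H_D$, and rework the infinity analysis so that it identifies $Q_\infty$ as the generic attractor rather than excluding it.
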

\begin{proof}
In the preceding sections we established that we can describe the dynamics in this scenario for massive particles by the dynamical system~\eqref{E: HDprime}--\eqref{E: hc} on the state-space $\mathcal X$, \eqref{E: state-space}, and for massless particles by the restriction of this system to the boundary $\mathcal X_0$, \eqref{E: X0}. We can thus proof our statement in the framework of dynamical systems theory, and do so through Lemmas~\ref{L: 1}--\ref{L: 4} below.
\end{proof}

\begin{lemma}\label{L: 1}
All past asymptotic solutions satisfy $H_D=1$ and all future asymptotic solutions satisfy $H_D=-1$.
\end{lemma}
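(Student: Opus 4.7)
The plan is to exhibit $H_D$ itself as a monotonically decreasing function along every orbit in the interior of $\mathcal{X}$, and then to apply the monotonicity principle of dynamical systems.

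The starting point is the evolution equation \eqref{E: HDprime}, which reads
\begin{equation}\notag
H_D'=-(1-H_D^2)(q-H_D\Sigma_+).
\end{equation}
Since $|H_D|<1$ on the interior, the factor $1-H_D^2$ is strictly positive there, and the sign of $H_D'$ is governed entirely by $q-H_D\Sigma_+$. The bulk of the work therefore consists in proving the pointwise inequality
\begin{equation}\notag
q-H_D\Sigma_+>0 \qquad \text{on } \overline{\mathcal X}.
\end{equation}

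To obtain this, I would first invoke the Hamiltonian constraint \eqref{E: hc} to rewrite
\begin{equation}\notag
q-H_D\Sigma_+=\bigl(2-\tfrac{\beta}{2}\bigr)\Sigma_+^2-H_D\Sigma_++\tfrac{\beta}{2},\qquad \beta:=1+3w(l,s),
\end{equation}
and read this as a quadratic in $\Sigma_+$. The main auxiliary input is the range of the matter parameter $w$: from the integral representations \eqref{E: rho} and \eqref{E: pi}, one checks that $w_1,w_2\ge 0$ (hence $w\ge 0$), and the elementary pointwise estimate $g^{ii}v_i^2\le m+g^{jj}v_j^2$ applied inside \eqref{E: pi} and \eqref{E: rho} yields $w\le 1/3$, with equality only in the massless case. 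Consequently $\beta\in[1,2]$, the leading coefficient $2-\beta/2\ge 1>0$, the constant term $\beta/2>0$, and the discriminant satisfies
\begin{equation}\notag
H_D^2-(4-\beta)\beta\le 1-\min_{\beta\in[1,2]}(4-\beta)\beta=1-3=-2<0.
\end{equation}
So the quadratic is strictly positive on the entire closure, and $H_D'<0$ throughout the interior.

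With monotonicity of $H_D$ in hand, I would finish by the standard monotonicity argument (cf.\ \cite[Chap~4]{WainwrightEllis1997}): for any orbit starting at an interior point, $H_D(\tau)$ is strictly decreasing and bounded, so $\lim_{\tau\to+\infty}H_D$ and $\lim_{\tau\to-\infty}H_D$ exist. On any $\omega$- or $\alpha$-limit point, $H_D$ is constant under the flow, hence $H_D'=0$ there; but the computation above shows this forces $H_D=\pm1$. Combining with the bounds $\lim_{\tau\to+\infty}H_D<H_D(0)<1$ and $\lim_{\tau\to-\infty}H_D>H_D(0)>-1$ pins down the future limit to $H_D=-1$ and the past limit to $H_D=+1$.

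The only real obstacle I anticipate is handling the noncompactness of $\mathcal{X}$ in the $M_1$-direction: a priori an orbit could escape to $M_1=\infty$ rather than accumulate on a genuine limit point, so the argument must be applied after the formal compactification discussed in Subsection~\ref{SS: challenges} and worked out in Appendix~\ref{A: flow at infinity}. Once that is in place, the monotonicity of $H_D$ extends to the compactified state-space (the quadratic bound above made no use of $M_1$), and the conclusion follows uniformly.
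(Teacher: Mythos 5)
Your proposal is correct and follows essentially the same route as the paper: strict monotonicity of $H_D$ from the sign of $q-H_D\Sigma_+$ (the paper leaves the quadratic estimate implicit, noting only $w\ge 0$, which already suffices since $q-H_D\Sigma_+\ge \tfrac{3}{2}\Sigma_+^2-H_D\Sigma_++\tfrac12$ has negative discriminant), followed by the Poincar\'e-cylinder compactification to rule out escape to $M_1=\infty$ without $H_D\to\pm1$. Your explicit verification of $0\le w\le 1/3$ and the discriminant computation are a harmless elaboration of the same argument.
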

\begin{proof}
Cf~Subsection~\ref{SS: proof L1}.
\end{proof}
\begin{lemma}\label{L: 2}
All past and future asymptotic solutions satisfy $l=0$.
\end{lemma}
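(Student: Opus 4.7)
The plan is to exploit the particularly simple structure of the evolution equation~\eqref{E: lprime}, $l' = 2 H_D l(1-l)$: on $\mathcal{X}$ one has $l(1-l) > 0$, so $l$ evolves strictly monotonically with sign controlled entirely by $H_D$. Together with Lemma~\ref{L: 1}, which places all past limit sets on $H_D=1$ and all future limit sets on $H_D=-1$, this suggests that $l$ is driven toward a fixed point of the restricted $l$-dynamics, i.e.\ $l \in \{0,1\}$, and the substantive task becomes to exclude the value $l=1$.

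The key step is to pass to the variable $u := l/(1-l)$, which is an order-preserving bijection of $l\in(0,1)$ with $u\in(0,\infty)$ and satisfies the linear equation $u' = 2 H_D u$. This integrates explicitly to
\begin{equation*}
u(\tau) \;=\; u(\tau_0)\, \exp\Bigl( 2 \int_{\tau_0}^\tau H_D(s)\, ds \Bigr),
\end{equation*}
so proving $l(\tau)\to 0$ reduces to proving $\int_{\tau_0}^{\tau} H_D(s)\, ds \to -\infty$ in the relevant direction. I would then upgrade Lemma~\ref{L: 1} from a statement about limit sets to pointwise convergence $H_D(\tau) \to 1$ as $\tau\to-\infty$ and $H_D(\tau)\to -1$ as $\tau\to+\infty$ by using the monotonicity of $H_D$ along orbits. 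This monotonicity follows from~\eqref{E: HDprime}, $H_D' = -(1-H_D^2)(q - H_D\Sigma_+)$, together with the positivity of $q - H_D\Sigma_+$ on $\mathcal{X}$: since $w \geq 0$ for Vlasov matter one has $q \geq \tfrac{1}{2}(1+3\Sigma_+^2)$, hence $q - H_D\Sigma_+ \geq \tfrac{1}{2}(1+3\Sigma_+^2) - |\Sigma_+| \geq 1/3$, and $H_D$ is strictly decreasing on $\mathcal{X}$. Combined with Lemma~\ref{L: 1} this gives $H_D \to \pm 1$ pointwise, and the integral $\int_{\tau_0}^\tau H_D\, ds$ diverges to $-\infty$ at both endpoints of time (because the integrand is eventually bounded away from $0$ with the appropriate sign relative to the direction of integration). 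Thus $u\to 0$ and therefore $l\to 0$ in both time directions.

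The step I anticipate as the main obstacle is precisely this upgrade from limit-set membership to pointwise convergence of $H_D$, since a priori $H_D$ could approach $\pm 1$ non-monotonically. The monotonicity computation above handles it cleanly; notably, this is also the only point in the argument where the specific matter content enters, through the inequality $w \geq 0$, which holds for Vlasov matter.
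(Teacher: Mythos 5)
Your proof is correct, but it handles the key technical obstacle differently from the paper. The paper's route is qualitative: it observes from~\eqref{E: lprime} that the sign of $l'$ is dictated by the sign of $H_D$, invokes the monotonicity of $H_D$ from Lemma~\ref{L: 1} to get $l'>0$ before the time of maximal expansion and $l'<0$ after, concludes that the $\alpha$- and $\omega$-limit sets lie in $\overline{\mathcal X_0}$, and then must separately rule out the possibility that these limit sets are empty because $M_1\to\infty$; that last step is done by projecting the flow onto the Poincar\'e cylinder (Appendix~\ref{A: flow at infinity}) and checking that the same monotonicities persist on the equator. Your substitution $u=l/(1-l)$ (which is just $(\det g)^{1/3}$) linearises the $l$-equation to $u'=2H_Du$ and reduces everything to the divergence of $\int H_D\,d\tau$, which you then obtain from the uniform differential inequality $H_D'\le-\tfrac13(1-H_D^2)$; the bound $q-H_D\Sigma_+\ge\tfrac12(1+3\Sigma_+^2)-|\Sigma_+|\ge\tfrac13$ is correct and, crucially, is uniform in $M_1$, so your argument gives pointwise convergence $H_D\to\pm1$ and $l\to0$ along every orbit without any compactification at infinity. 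What your approach buys is a self-contained, quantitative proof (it even yields the decay rate $u\sim e^{-2|\tau|}$) that sidesteps Appendix~\ref{A: flow at infinity} for this lemma; what the paper's approach buys is that the Poincar\'e-cylinder machinery is needed anyway in Lemma~\ref{L: 3} to locate the attractors $Q_\infty$, $\tilde Q_\infty$ at infinity, so no work is wasted. The only points worth making explicit in your write-up are the global existence of orbits in $\tau$ (so that the integral representation of $u$ is valid for all times; this follows since $M_1$ grows at most exponentially) and the trivial massless case $l\equiv0$ on $\mathcal X_0$.
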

\begin{proof}
Cf~Subsection~\ref{SS: proof L2}.
\end{proof}
\begin{lemma}\label{L: 3}
The only past (future) attractor in $\mathcal X\cup\mathcal X_0$ is the equilibrium point $Q_\infty$ ($\tilde Q_\infty$) at infinity. $R_\infty$ repels orbits from a three-dimensional (two-dimensional) set in $\mathcal X$ ($\mathcal X_0$). $\tilde R_\infty$ attracts accordingly. $F$ repels orbits from a two-dimensional (one-dimensional) set in $\mathcal X$ ($\mathcal X_0$). The image of $F$ under~\eqref{E: disc sym} attracts accordingly.
\end{lemma}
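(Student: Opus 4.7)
The strategy is to combine Lemmas~\ref{L: 1}--\ref{L: 2} with a linearisation analysis at each equilibrium of the reduced boundary system. Lemmas~\ref{L: 1}--\ref{L: 2} imply that every $\alpha$-limit set in $\mathcal X\cup\mathcal X_0$ lies in $\mathcal X_0^+=\{H_D=1,\,l=0\}$ and every $\omega$-limit set in $\mathcal X_0^-$, so by the discrete symmetry~\eqref{E: disc sym} it suffices to classify the equilibria in $\overline{\mathcal X_0^+}$ together with its extensions to $M_1=\infty$ and along $\mathcal I^+$.

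On $\mathcal X_0^+$ the values~\eqref{E: w0} reduce~\eqref{E: HDprime}--\eqref{E: hc} to a planar system in $(\Sigma_+,M_1)$ whose finite equilibria $T,Q,R_\sharp$ all sit on $\mathcal I_0^+$. I would then locate the remaining equilibria in two charts: on the Poincar\'e cylinder of Appendix~\ref{A: flow at infinity} they appear as $T_\infty,Q_\infty,R_\infty$, while in the $y$-coordinates of Subsection~\ref{SS: state-space Y} the point $F\in\mathcal Y_0$ shows up as an interior zero of the reduced vector field on $\mathcal I_0^\mathcal Y$; consistency of the two lists on the overlap follows from the diffeomorphism~\eqref{E: diffeo}.

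At each equilibrium I would compute the $4\times 4$ Jacobian of~\eqref{E: HDprime}--\eqref{E: lprime} in the appropriate chart. The two transverse directions to $\mathcal X_0^+$ are immediate: \eqref{E: lprime} yields $\partial_l l'|_{l=0}=2H_D=+2$, so the $l$-direction contributes a uniformly unstable eigenvalue at every equilibrium with $H_D=1$; \eqref{E: HDprime} yields $\partial_{H_D}H_D'|_{H_D=1}=2(q-\Sigma_+)$, whose sign at each equilibrium decides whether the $H_D$-direction is stable. The two remaining (tangential) eigenvalues come from the planar subsystem or from its compactified and $y$-chart counterparts. Counting signs then gives the claimed structure: four positive eigenvalues at $Q_\infty$, three positive and one negative at $R_\infty$ (with the stable direction tangential to $\mathcal X_0$), and two of each sign at $F$ (with one unstable direction being the $l$-direction); the remaining finite equilibria $T,Q,R_\sharp$ and the infinity equilibrium $T_\infty$ are ruled out by exhibiting eigenvalues of both signs in $\mathcal X$ and $\mathcal X_0$. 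The statements for $\tilde Q_\infty$, $\tilde R_\infty$ and the image of $F$ then follow from~\eqref{E: disc sym}.

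The hard part is the combined lack of smoothness of~\eqref{E: HDprime}--\eqref{E: hc} on $\mathcal I^\pm$ (Subsection~\ref{SS: challenges}) together with the unboundedness of $\mathcal X$ in the $M_1$-direction: the equilibria $Q_\infty$, $R_\infty$ and $F$ are not all visible in a single chart, so the eigenvalue bookkeeping has to be carried out piecewise in the Poincar\'e-cylinder and $y$-charts and matched on their overlap via~\eqref{E: diffeo}. Once this matching is in place the remaining work reduces to routine Jacobian computations at the individual equilibria.
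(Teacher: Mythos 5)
Your outline follows the paper's proof quite closely in its skeleton (reduce to $\overline{\mathcal X_0^+}$ and its extensions via Lemmas~\ref{L: 1}--\ref{L: 2}, treat infinity on the Poincar\'e cylinder, treat $\mathcal I_0^+$ in the $y$-chart, finish by the discrete symmetry), but the reduction of the whole argument to ``routine Jacobian computations'' hides the two places where the actual work lies, and at one of them the plan as written would fail. Concretely: at the finite equilibria sitting on $\mathcal I_0^+$ the vector field~\eqref{E: HDprime}--\eqref{E: hc} has \emph{no} smooth extension (the limit of $s$ does not exist there), so the $4\times4$ Jacobian in the $x$-chart is not defined; worse, if you compute it formally using the one-sided value $s=1/2$ you get at $T$ the planar eigenvalues $3,6$ from~\eqref{E: DS in X0p} \emph{and} positive transverse eigenvalues $\partial_l l'|_{l=0}=2$ and $2(q-H_D\Sigma_+)|_{H_D=1}=6$, i.e.\ four positive eigenvalues --- your claim that $T$ is ``ruled out by exhibiting eigenvalues of both signs'' is false in any single $x$-chart computation, and taken at face value it would promote $T$ to a second past attractor, contradicting the lemma. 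The stable behaviour that actually disqualifies $T$ only becomes visible after the blow-up to the $y$-coordinates, where $T$ unfolds into $T_\flat$ and $T_\sharp$ and the flow \emph{on} $\mathcal I_0^\mathcal Y$ supplies the missing attracting $\theta$-direction. The paper obtains that flow not by a fresh linearisation but by observing that on $\mathcal I_0^\mathcal Y$ the system coincides with the LRS Bianchi~IX system and importing the complete qualitative analysis of Calogero--Heinzle; your plan would have to derive and analyse that reduced system explicitly, which is substantially more than bookkeeping.

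The second gap is global: eigenvalue counting at equilibria shows at best that $Q_\infty$ is the only fixed point whose unstable set is open in $\mathcal X\cup\mathcal X_0$, but the lemma asserts that $Q_\infty$ \emph{is} the past attractor, which additionally requires that the $\alpha$-limit sets of generic orbits contain no other invariant sets (periodic orbits, heteroclinic cycles) on the two-dimensional boundary pieces $\overline{\mathcal X_0^+}$, its copy at infinity, and $\overline{\mathcal I_0^\mathcal Y}$. The paper supplies this with the strict monotonicity of $\Sigma_+$ on either side of the separatrix $\Sigma_+=1/2$ in $\overline{\mathcal X_0^+}$ (and its mirror at infinity), and again with the imported global flow on $\mathcal I_0^\mathcal Y$ --- note that the orbits there spiral into $F$, so no single monotone coordinate is available on that piece and the exclusion of periodic orbits is genuinely nontrivial. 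Your proposal contains no monotone-function or Poincar\'e--Bendixson-type argument, so as written it does not close the step from ``local spectra at the equilibria'' to ``the only past attractor is $Q_\infty$''. The rest of your eigenvalue accounting (at $Q_\infty$, $R_\infty$, $T_\infty$ and $F$, and the transverse $H_D$- and $l$-directions) is consistent with the paper's conclusions.
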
 
\begin{proof}
Cf~Subsection~\ref{SS: proof L3}.
\end{proof}
\begin{lemma}\label{L: 4}
The fixed points correspond to the exact solutions listed in Table~\ref{Tbl: 1}.
\end{lemma}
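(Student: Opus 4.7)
The plan is to exhaust the fixed points of the system \eqref{E: HDprime}--\eqref{E: hc}, one by one, and for each back out the associated metric~\eqref{E: SH LRS metric} by inverting the definitions of $H_D$, $\Sigma_+$, $M_1$, $l$. The algebraic part is short: $l' = 2H_D l(1-l) = 0$ forces $l\in\{0,1\}$ or $H_D = 0$, while $H_D' = 0$ forces $H_D = \pm 1$ or $q = H_D\Sigma_+$; combined with $M_1' = 0$ and the Hamiltonian constraint~\eqref{E: hc} this gives precisely the discrete set of labelled equilibria appearing in Figures~\ref{F: state-space X0} and~\ref{F: state-space Y0} (after adding the compactified points at $M_1 = \infty$ treated in Appendix~\ref{A: flow at infinity}). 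Each fixed point fixes constant values $H_D^\ast, \Sigma_+^\ast, M_1^\ast, l^\ast$, and via \eqref{E: s}, \eqref{E: wi} also constant values of $s$, $w_1$, $w_2$, $w$, hence of the matter parameters entering the energy-momentum tensor.

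The reconstruction of the metric then proceeds by inverting the definitions of $H_D$ and $\Sigma_+$ in terms of the extrinsic curvature components $k^i{}_j = \tfrac{1}{2} g^{ik} \dot g_{kj}$, which yields two first-order ODEs for $\log g_{11}(t)$ and $\log g_{22}(t)$ with right-hand sides linear in $D(t)$. Since $H = DH_D$ is the Hubble scalar, the evolution of $D$ itself is obtained from the Raychaudhuri equation together with $q$ at the fixed point: one finds $\dot H \propto -(1+q^\ast)H^2$, giving $H \propto 1/t$ and hence $g_{ii} \propto t^{p_i}$ for explicit exponents $p_i$ determined by $H_D^\ast$, $\Sigma_+^\ast$, and $w^\ast$. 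Choosing the origin of $t$ to sit at the singularity normalises the remaining integration constants into the two positive prefactors $a, b$ appearing in the statement.

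The three cases in Table~\ref{Tbl: 1} then fall out by case analysis. At the vacuum Kasner points ($\Omega = 0$, so $\Sigma_+^\ast = \pm 1$ by \eqref{E: hc}) the exponents compute to $-2/3$ and $4/3$, reproducing the non-flat LRS Kasner metric of statement~(a). At the Bianchi~I-type points in $\mathcal I^\pm$ (where $M_1 = 0$ and the anisotropic pressures $w_1, w_2$ take nontrivial fixed values from~\eqref{E: wi}) the two exponents are unequal but both positive, giving the anisotropic matter solutions of~(b); the massless sub-case in $\mathcal I_0^\pm$ uses~\eqref{E: w0}. At the isotropic point $F$ on the $\Sigma_+ = 0$ axis of $\mathcal X_0$, the value $w = 1/3$ from~\eqref{E: w0} together with $q = 1$ gives the common exponent $1/2$, i.e.\ the flat radiation Friedmann solution of~(c).

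The main obstacle I expect is the points at infinity in $M_1$ on the Poincar\'e cylinder: there the variables of the original system blow up, so before inverting one must first translate back from the compactified chart of Appendix~\ref{A: flow at infinity} and check what the limit $M_1 \to \infty$ means for the ratio $\sqrt{g_{11}}/g_{22}$; geometrically it just picks out which spatial direction is contracting faster on approach to the singularity, but one has to be careful that the power-law ansatz used to integrate the ODEs is consistent with $D \to \infty$ along the orbit. A secondary subtlety is verifying that the resulting power-law spacetimes really carry a symmetric split-supported Vlasov distribution realising the required $w_1^\ast, w_2^\ast$; this reduces to checking that the fixed values of $(s, l)$ lie in the image of the map \eqref{E: wi}, which is immediate from~\eqref{E: w0} and the boundary values of $l$.
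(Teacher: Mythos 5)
Your overall strategy is the same as the paper's. The paper simply quotes from Calogero--Heinzle (App~A of their 2011 paper) the general fixed-point formulas $g_{11}(t)=at^{2\gamma_1}$, $g_{22}(t)=bt^{2\gamma_2}$ with
$\gamma_1=\tfrac{H_D-2\Sigma_+}{H_D(1+q)+\Sigma_+(1-H_D^2)}$, $\gamma_2=\tfrac{H_D+\Sigma_+}{H_D(1+q)+\Sigma_+(1-H_D^2)}$ and $\rho\propto t^{-2}$, $p_i=w_i\rho$, and then substitutes the fixed-point coordinates to produce Table~\ref{Tbl: 1}; your plan of re-deriving these by inverting the definitions of $H_D,\Sigma_+$ and integrating the Raychaudhuri equation is exactly the computation behind those formulas, and it is legitimate here because (by Lemma~\ref{L: 1}) every fixed point sits on $H_D=\pm1$, where $D=|H|$, so normalising by $D$ or by $H$ is the same thing. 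Your worry about the points at infinity is also unfounded in the way you fear: by \eqref{E: pqs=0} the induced system on the equator of the Poincar\'e cylinder is just the original system with the $M_1$ equation dropped, so $T_\infty,Q_\infty,R_\infty$ carry the same $(H_D,\Sigma_+,l)$ data, hence the same $\gamma_i$, as their finite-$M_1$ counterparts.

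The concrete defect is in your case analysis: you announce ``three cases'' but Table~\ref{Tbl: 1} has four rows, and you have collapsed the two vacuum families into one. At a vacuum point the sign of $H_D\Sigma_+$ matters. For $H_D=1$, $\Sigma_+=1$ (the $Q$-type points) one gets $q=2$, $\gamma_1=-1/3$, $\gamma_2=2/3$, i.e.\ the non-flat LRS Kasner metric $(at^{-2/3},bt^{4/3})$; but for $H_D=1$, $\Sigma_+=-1$ (the $T$-type points) one gets $\gamma_1=1$, $\gamma_2=0$, i.e.\ the Taub / flat LRS Kasner solution $(at^{2},b)$ --- a different spacetime, and the one occupying the second row of the table. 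As written, your proposal assigns the exponents $-2/3$ and $4/3$ to \emph{all} points with $\Sigma_+^\ast=\pm1$, which is false for $T,T_\flat,T_\sharp,T_\infty$ and their images under \eqref{E: disc sym}, so the correspondence claimed in the lemma is not established for that family. A smaller slip of the same kind: at the Bianchi~I points $R,R_\sharp,R_\infty$ one has $\Sigma_+=1/2$, $q=5/4$, hence $\gamma_1=0$ and $\gamma_2=2/3$, so the two exponents are not ``both positive'' --- $g_{11}$ is constant there. Both errors are reparable by actually evaluating your own formulas at each labelled point, but the proof is incomplete without doing so.
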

\begin{proof}
Cf~Subsection~\ref{SS: proof L4}
\end{proof}

\subsection{Proof of Lemma~\ref{L: 1}}\label{SS: proof L1}

Since Vlasov matter satisfies a non-negative pressures condition, cf~\cite[Sec~1]{Rendall2004}, we have $w(l,s)\geq0$ in $\overline{\mathcal X}$. With this we find from~\eqref{E: HDprime} that $H_D'<0$ $\forall\,x\in\overline{\mathcal X}\mathcal \backslash\overline{\mathcal X^\pm}$. In other words, $H_D$ is strictly monotonically decreasing along orbits in $\overline{\mathcal X}\mathcal \backslash \overline{\mathcal X^\pm}$. We therefore know that
\begin{align}\label{E: limit sets in S}
\alpha(\Gamma) \subseteq \overline{\mathcal X^+} \quad\text{and}\quad
\omega(\Gamma) \subseteq \overline{\mathcal X^-} \qquad
\forall\,\Gamma\in\overline{\mathcal X}\mathcal \backslash \overline{\mathcal X^\pm} ,
\end{align}
where $\Gamma$ denotes an arbitrary orbit, and $\alpha(\Gamma)$ ($\omega(\Gamma)$) its $\alpha$ ($\omega$) limit set, ie its past (future) asymptotic sets; cf~\cite[Def~4.12]{WainwrightEllis1997}.

Note that~\eqref{E: limit sets in S} does not exclude the possibility that the limit sets could be empty. In other words, the solutions may satisfy $M_1\to\infty$ asymptotically. We thus have to analyse the flow at infinity. Since the only equation of the system~\eqref{E: HDprime}--\eqref{E: hc} which depends on the coordinate $M_1$ is the evolution equation of $M_1$ itself, naively we would await $H_D$ to also be strictly monotonically decreasing at infinity. A careful check via a formal compactification by projecting the flow onto the `Poincare cylinder' entails that this intuition is indeed correct; cf Appendix~\ref{A: flow at infinity}. Hence, we know that if the limit sets~\eqref{E: limit sets in S} are empty, ie if the solutions satisfy $M_1\to\infty$ for $\tau\to\pm\infty$, then also $H_D\to\pm1$. This completes the proof of Lemma~\ref{L: 1}.

\subsection*{Interpretation of Lemma~\ref{L: 1}}
We have shown that all Kantowski-Sachs cosmologies with Vlasov matter satisfy
\begin{align}\label{E: exp phase}
H_D\to+1\text{ for }\tau\to-\infty , \quad 
H_D\to0\text{ for }\tau=\tau^* , \quad
H_D\to-1\text{ for }\tau\to+\infty ,
\end{align}
where $\tau^*$ is some finite time, depending on the initial data. From the physical meaning of $H_D$ given in Section~\ref{SS: dynamical system} as the $D$-normalised Hubble scalar, we can interpret this result as follows: Kantowski-Sachs solutions with Vlasov matter undergo an expanding phase for $\tau<\tau*$, at the end of which they reach a state of maximal expansion for $\tau=\tau*$, after which they undergo a contracting phase for $\tau>\tau^*$.

\subsection{Proof of Lemma~\ref{L: 2}}\label{SS: proof L2}

For massless particles, ie for solutions in~$\mathcal X_0$, Lemma~\ref{L: 1} is trivial; cf Subsection~\ref{SS: matter functions}. From~\eqref{E: lprime} we see that for solutions in $\overline{\mathcal X}\backslash(\overline{\mathcal X_0}\cup\overline{\mathcal X_1})$ the sign of $l'$ is dictated by the sign of $H_D$. From~\eqref{E: exp phase} we thus know that Kantowski-Sachs solutions with massive Vlasov matter satisfy
\begin{align}\label{E: lprime at early times}
l'>0\text{ for }\tau\in(-\infty,\tau^*) , \quad
l'=0\text{ for }\tau=\tau^* , \quad
l'<0\text{ for }\tau\in(\tau^*,+\infty) , 
\end{align}
and together with~\eqref{E: limit sets in S} we can thus infer that
\begin{align}\label{E: limit sets in X0}
\alpha(\Gamma)\subseteq\overline{\mathcal X_0} \quad\text{and}\quad
\omega(\Gamma)\subseteq\overline{\mathcal X_0} \qquad
\forall\,\Gamma\in\overline{\mathcal X}\backslash(\overline{\mathcal X_0^\pm}\cup\overline{\mathcal X_1^\pm}) .
\end{align}
\eqref{E: limit sets in X0} does not exclude the possibility that the limit sets could be empty. Thus, as in the proof of Lemma~\ref{L: 1}, we have to analyse the flow at infinity. Following the same arguments, we would expect the monotonicities~\eqref{E: lprime at early times} to also hold at infinity, and in complete analogy, a projection of the flow onto the `Poincaré cylinder' shows that this is indeed the case; cf Appendix~\ref{A: flow at infinity}. Hence, we know that if the limit sets~\eqref{E: limit sets in X0} are empty, ie if the solutions satisfy $M_1\to\infty$ for $\tau\to\pm\infty$, then also $l\to0$. This completes the proof of Lemma~\ref{L: 2}.

\subsection*{Interpretation of Lemma~\ref{L: 2}}

From the physical interpretation of $l$ as a length scale associated with the spatial volume element $\det\,g$, cf Subsection~\ref{SS: dynamical system}, Lemma~\ref{L: 2} says the following: Kantowski-Sachs solutions with massive Vlasov matter exhibit a big-bang singularity towards the past and a big-crunch singularity towards the future; in other words, they recollapse. This is in compliance with more general recollapse results; cf Subsection~\ref{SS: recollapse}.

Furthermore, as discussed in Subsection~\ref{SS: matter functions}, $\mathcal X_0$, ie the $l=0$ boundary of $\mathcal X$, can be interpreted as the state-space for the case  of massless Vlasov particles, which is three-dimensional. In this sense we can conclude that Kantowski-Sachs cosmologies with massive Vlasov particles, behave like Kantowski-Sachs cosmologies with massless particles towards both, the big-bang and the big-crunch singularities.

\subsection{Proof of Lemma~\ref{L: 3}}\label{SS: proof L3}
So far we know~\eqref{E: limit sets in X0}, and that if these sets are empty, then the solutions must asymptote towards those regions of $M_1\to\infty$, for which $H_D=\pm1$ and $l=0$. What is left to do is to identify precisely where exactly solutions asymptote to in those two-dimensional regions. We restrict our analysis to the past asymptotics, from which we immediately get the future asymptotics as well via the discrete symmetry~\eqref{E: disc sym}.

From the second relation of~\eqref{E: correspondences} we see that in terms of the $y$~coordinates $\alpha(\Gamma)$ $\subseteq$ $\overline{\mathcal S_0}\cup\overline{\mathcal I_0^\mathcal Y}$. In $\mathcal I_0^\mathcal Y$ we have $M_1=0$, and for this the dynamical system~\eqref{E: HDprime}--\eqref{E: hc} coincides with that of LRS Bianchi~IX; cf~\cite[Sec~9]{CalogeroHeinzle2011}. Thus we can use the result of the analysis in~\cite[Sec~5]{CalogeroHeinzle2010}. The respective qualitative flow in $\overline{\mathcal I_0^\mathcal Y}$ is depicted in their Figure~3, and we reproduce it here in our Figure~\ref{F: flow diagram}.\footnote{We note that the flow on $\overline{\mathcal I_0^\mathcal Y}$ is also equivalent with~\cite[Fig~15(g)]{CalogeroHeinzle2011} and the respective parts of~\cite[Fig~7 and~8]{Heissel2012}; the latter in the context of LRS Bianchi type~VIII.}

Because of~\eqref{E: diffeo} the flow in $\mathcal S_0$ is qualitatively equivalent to the one in $\mathcal X_0^+$. We choose to analyse the latter, since the $x$~coordinates are better suited as a basis for the analysis of the flow at infinity than the $y$~coordinates; cf Subsection~\ref{SS: state-space Y}. To restrict the dynamical system~\eqref{E: HDprime}--\eqref{E: hc} to this boundary, we have to set $H_D=1$ and $l=0$. From~\eqref{E: s} we then have $s=1/2$, and from this and~\eqref{E: wi} we get~\eqref{E: w0}. With this the dynamical system on $\mathcal X_0^+$ reduces to
\begin{align}\label{E: DS in X0p}
\Sigma_+' &= -\frac{1}{2}(1-\Sigma_+^2)(2\Sigma_+-1)  , &
M_1' &= M_1 \big( \Sigma_+^2-4\Sigma_++1 \big) . 
\end{align}

This system exhibits three fixed points in $\overline{\mathcal X_0^+}$:
\begin{align}\label{E: fixed points on S}
T&:=(-1,0)	&	Q&:=(1,0)	&	R&:=(1/2,0)
\end{align}
Their local stability properties follow from the eigenvalues and eigenvectors of the linearisation of~\eqref{E: DS in X0p}, evaluated at~\eqref{E: fixed points on S}; cf~\cite[Sec~2.6]{Perko2001} or~\cite[Sec~4.3.2]{WainwrightEllis1997}. Below we follow the notation scheme $P:\begin{bmatrix}\begin{smallmatrix} \lambda_1 \\ \lambda_2 \end{smallmatrix}\end{bmatrix}, \begin{bmatrix} u_1 \, u_2 \end{bmatrix}$ where $\lambda_i$ denotes the eigenvalue corresponding to the eigenvector $u_i$ of the linearisation at $P$. We find
\begin{align*}
T&:\begin{bmatrix} 3 \\ 6 \end{bmatrix}, \begin{bmatrix} 1 & 0 \\ 0 & 1 \end{bmatrix}		&
&\rightarrow\text{local source,} \\
Q&:\begin{bmatrix} 1 \\ -2 \end{bmatrix}, \begin{bmatrix} 1 & 0 \\ 0 & 1 \end{bmatrix}		&
&\rightarrow\text{local saddle repelling in $\Sigma_+$-direction,} \\
R&:\begin{bmatrix} -3/4 \\ -3/4 \end{bmatrix}, \begin{bmatrix} 1 & 0 \\ 0 & 1 \end{bmatrix}	&
&\rightarrow\text{local sink.}
\end{align*}

Next we investigate the monotonicity of the flow. From~\eqref{E: DS in X0p}
\begin{align*}
\Sigma_+'&>0 \text{ for }\Sigma_+\in(-1,1/2) , \\
\Sigma_+'&=0 \text{ for }\Sigma_+\in\{\pm1,1/2\} , \\
\Sigma_+'&<0 \text{ for }\Sigma_+\in(1/2,1) .
\end{align*}
Therefore, the line $\Sigma_+=1/2$ represents a separatrix of the flow in $\mathcal X_0^+$; cf~\cite[Sec~3.11]{Perko2001}. For higher (lower) values of $\Sigma_+$, $\Sigma_+$ is strictly monotonically decreasing (increasing) along the flow in $\mathcal X_0^+$.

What is left to do is to investigate the flow at infinity. Since the first equation of~\eqref{E: DS in X0p} is independent of $M_1$, naively one would expect that the flow at infinity resembles that of the flow at $M_1=0$. With a projection of the flow onto the `Poincare cylinder' one can convince oneself that this is indeed the case; cf Appendix~\ref{A: flow at infinity}. This means that there are three fixed points at infinity,
\begin{align}\notag
T_\infty&:=(-1,\infty) \,,	&	Q_\infty&:=(1,\infty) \,,	&	R_\infty&:=(1/2,\infty) \,,
\end{align}
and the flow between these indeed resembles the flow on the line $M_1=0$.

We have now gathered all the information required to draw the qualitative flow diagram in $\mathcal X_0^+$ $\cong$ $\mathcal S_0$. It is shown in Figure~\ref{F: flow diagram} together with the flow in $\mathcal I_0^\mathcal Y$; cf also Figures~\ref{F: state-space X0} and~\ref{F: state-space Y0}. From this, together with the monotonicities of $H_D$ and $l$ obtained in Subsections~\ref{SS: proof L1} and~\ref{SS: proof L2}, we obtain Lemma~\ref{L: 3}.
\begin{figure}
\begin{tikzpicture}[scale=1.8]
	
	\draw[->] (0,-1) -- (0,-.25); \draw (0,-.25) -- (0,.75); \draw[<-] (0,.75) -- (0,1);	
	\draw[->] (0,1) -- (-.75,1); \draw (-.75,1) -- (-1.5,1);
	\draw[->] (-1.5,1) -- (-1.5,0); \draw (-1.5,0) -- (-1.5,-1);
	\draw[->] (-1.5,-1) -- (-.75,-1); \draw (-.75,-1) -- (0,-1);
	
	\draw[->] (0,-1) -- (2,-1); \draw (2,-1) -- (4,-1);	
	\draw[->] (4,-1) -- (4,-.25); \draw (4,-.25) -- (4,.75); \draw[<-] (4,.75) -- (4,1);
	\draw[->] (4,1) -- (2,1); \draw (2,1) -- (0,1);
	
	\draw[->,dashed] (4,.5) -- (2,.5); \draw[dashed] (2,.5) -- (0,.5);	
	\draw[->] (4,1) .. controls+(-135:.2) and+(0:.2) .. (2,.75);		
	\draw (2,.75) .. controls+(180:.2) and+(45:.2) .. (0,.5);
	\draw[->] (0,-1) .. controls+(45:.4) and+(-90:.7) .. (3.5,-.25);
	\draw (3.5,-.25) .. controls+(90:.7) and+(-45:.4) .. (0,.5);
	\draw[->] (0,-1) .. controls+(60:.5) and+(-90:.4) .. (1.5,-.25);
	\draw (1.5,-.25) .. controls+(90:.4) and+(-60:.5) .. (0,.5);

	\draw[->,dashed] (0,.5) .. controls+(170:.4) and+(95:.6) .. (-1,0);	
	\draw[dashed] (-1,0) .. controls+(-85:.6) and+(-25:.4) .. (-.75,0);
	\draw[->] (0,1) .. controls+(-170:.7) and+(95:.9) .. (-1.35,0); 
	\draw (-1.35,0) .. controls+(-85:.5) and+(180:.5) .. (-.75,-.8);
	\draw (-.75,-.8) .. controls+(0:.4) and+(-90:.6) .. (-.2,0);
	\draw (-.2,0) .. controls+(90:.6) and+(155:.3) .. (-.75,0);
	\draw[->] (0,1) .. controls+(-120:.3) and+(95:.8) .. (-1.18,0);
	\draw (-1.18,0) .. controls+(-85:.4) and+(-160:.3) .. (-.65,-.50);
	\draw (-.65,-.50) .. controls+(20:.3) and+(30:.5) .. (-.75,0);
	
	\draw[fill=white] (-.75,0) circle (.03);	
	\node[below] at (-.75,0) {$\scriptstyle F$};
	\draw[fill=white] (-1.5,1) circle (.03);
	\node[above left] at (-1.5,1) {$\scriptstyle Q_\flat$};
	\draw[fill=white] (-1.5,-1) circle (.03);
	\node[below left] at (-1.5,-1) {$\scriptstyle T_\flat$};
	\node[above] at (0,1) {$\scriptstyle Q_\sharp$};
	\node[below] at (0,-1) {$\scriptstyle T_\sharp$};
	\node[below left] at (0,.5) {$\scriptstyle R_\sharp$};
	\draw[fill=white] (4,1) circle (.03);
	\node[above right] at (4,1) {$\scriptstyle Q_\infty$};
	\draw[fill=white] (4,-1) circle (.03);
	\node[below right] at (4,-1) {$\scriptstyle T_\infty$};
	\draw[fill=white] (4,.5) circle (.03);
	\node[right] at (4,.5) {$\scriptstyle R_\infty$};
	
	\draw[->] (.2,-1.5) -- (3.8,-1.5);
	\node[below] at (.2,-1.5) {$\scriptstyle0$};
	\node[above] at (.2,-1.5) {$\scriptstyle0$};
	\node[below] at (3.8,-1.5) {$\scriptstyle\infty$};
	\node[above] at (3.8,-1.5) {$\scriptstyle\infty$};
	\node[below] at (2,-1.5) {$\scriptstyle M_1$};
	\node[above] at (2,-1.5) {$\scriptstyle r$};
	
	\draw[->] (-1.3,-1.5) -- (-.2,-1.5);
	\node[below] at (-1.3,-1.5) {$\scriptstyle0$};
	\node[above] at (-1.3,-1.5) {$\scriptstyle0$};
	\node[below] at (-.2,-1.5) {$\scriptstyle1/2$};
	\node[above] at (-.2,-1.5) {$\scriptstyle\pi$};
	\node[below] at (-.75,-1.5) {$\scriptstyle s$};
	\node[above] at (-.75,-1.5) {$\scriptstyle\theta$};

	\draw[->] (-2,-.8) -- (-2,.8);
	\node[left] at (-2,0) {$\scriptstyle\Sigma_+$};
	\node[left] at (-2,-.8) {$\scriptstyle-1$};
	\node[left] at (-2,.8) {$\scriptstyle1$};
	
	\node at (-.75,1.5) {$\scriptstyle\mathcal I_0^\mathcal Y$};
	\node at (2,1.5) {$\scriptstyle\mathcal S_0\cong\mathcal X_0^+$};

\end{tikzpicture}
\caption{The qualitative flow on $\overline{\mathcal I_0^\mathcal Y}\cup\overline{\mathcal S_0}$ $\sim$ $\overline{\mathcal X_0^+}$. As a consequence of Lemmas~\ref{L: 1} and~\ref{L: 2}, all fixed points are repelling orbits in the two respective orthogonal directions, which cannot be seen from the graph: $F$, $T_\flat$ and $Q_\flat$ are repelling in $r$ and $l$-direction, where for $T_\flat$ and $Q_\flat$, the $r$-direction also coincides with the $H_D$-direction. $T_\infty$ and $Q_\infty$ are repelling in $H_D$ and $l$-direction. $Q_\infty$ is the past attractor.}
\label{F: flow diagram}
\end{figure}
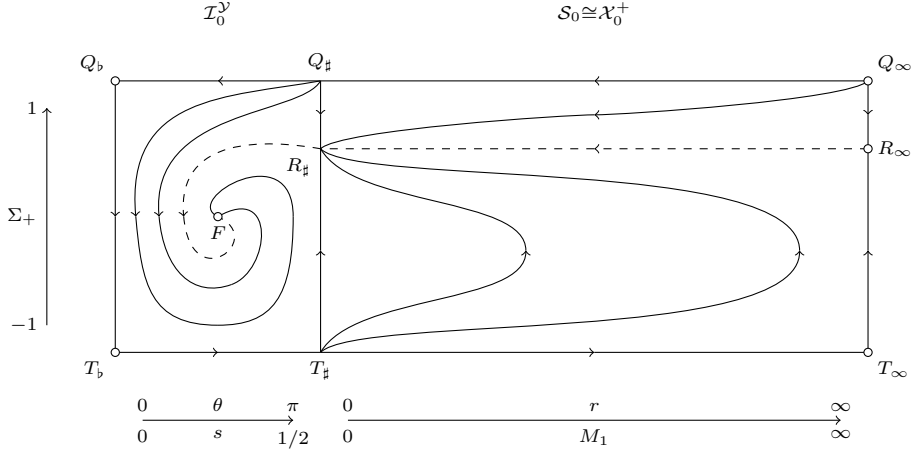 

\subsection*{Remark} Note that concerning massless Vlasov matter we recover~\cite[Thm~5.1]{RendallTod1999}.

\subsection{Proof of Lemma~\ref{L: 4}}\label{SS: proof L4}

We can identify the fixed points with exact solutions. \cite[App~A]{CalogeroHeinzle2011} gives a calculation of the components of the metric~\eqref{E: SH LRS metric} and the energy-momentum tensor~\eqref{E: rho}--\eqref{E: pi} in terms of the fixed point coordinates. The result for the metric components is
\begin{align}\notag
g_{11}(t) = at^{2\gamma_1} \qquad\text{and}\qquad g_{22}(t) = bt^{2\gamma_2}
\end{align}
with positive and generally constrained constaints $a,b$,
\begin{align}\notag
\gamma_1 = \frac{H_D-2\Sigma_+}{H_D(1+q)+\Sigma_+(1-H_D^2)} \qquad\text{and}\qquad
\gamma_2 = \frac{H_D+\Sigma_+}{H_D(1+q)+\Sigma_+(1-H_D^2)} .
\end{align}
The result for the components~\eqref{E: rho}--\eqref{E: pi} of the energy-momentum tensor is
\begin{align}\notag
\rho(t) = \frac{3(1-\Sigma_+^2)}{\big(H_D(1+q)+\Sigma_+(1-H_D^2)\big)^2}t^{-2} \qquad\text{and}\qquad
p_i(t) = w_i(l,s)\rho(t) .
\end{align}
Note that these expressions are only valid at the fixed points. Entering the coordinates for our fixed points and fixed points at infinity we obtain Table~\ref{Tbl: 1}, which concludes the proofs of Lemma~\ref{L: 4} and Theorem~\ref{T: 1}.
\begin{table}
\begin{tabular}{ l | c | c | c | c | c | l}
  fixed point & $g_{11}(t)$ & $g_{22}(t)$ & $\rho(t)$ & $p_1(t)$ & $p_2(t)$	& solution	\\ \hline
  $Q, Q_\flat, Q_\sharp,Q_\infty$	& $at^{-2/3}$	& $bt^{4/3}$	& $0$	& $0$	& $0$	& non-flat LRS Kasner\\
  $T, T_\flat, T_\sharp,T_\infty$		& $at^2$		& $b$		& $0$	& $0$	& $0$ 	& Taub (flat LRS Kasner)	\\
  $R, R_\sharp,R_\infty$ 			& $a$		& $bt^{4/3}$	& $\tfrac{4}{9}t^{-2}$	& $0$	& $\tfrac{2}{9}t^{-2}$	& no-name Bianchi~I		\\
  $F$							& $at$		& $bt$		& $\tfrac{3}{4}t^{-2}$	& $\tfrac{1}{4}t^{-2}$	& $\tfrac{1}{4}t^{-2}$	& flat Friedman
\end{tabular} \vspace{.4 cm}
\caption{The components of the metric~\eqref{E: SH LRS metric} and the energy-momentum tensor~\eqref{E: rho}--\eqref{E: pi} of the fixed point solutions. $a$ and $b$ are positive constants. The fixed points listed here satisfy $l=0$ and are depicted in Figures~\ref{F: state-space X0} and~\ref{F: state-space Y0}. However these have a copy at the $l=1$ boundary. Including the latter this table is complete.}
\label{Tbl: 1}
\end{table}


\begin{appendix}

\section{The flow at infinity}\label{A: flow at infinity}

In dynamical systems theory, the flow at infinity is usually analysed using projections of the flow onto the Poincaré sphere; cf~\cite[Sec~3.10]{Perko2001}. The sphere is the natural surface to project onto if one deals with a state-space which extends to infinity in all directions. However both, the state-space of LRS Bianchi~VIII in~\cite{Heissel2012} as well as that of Kantowski-Sachs in the present paper, extend to infinity in one coordinate direction only; the former in $H_D$-direction and the latter in $M_1$-direction. Hence, in these cases we seek to analyse the flow at infinity in the respective directions. However, those regions project to points on the Poincaré sphere, which is why this approach is not well suited for an analysis of the flow at infinity in these cases. In~\cite[Sec~6]{Heissel2012} the second author adopted this technique with the slight modification that the compactification is only done in one direction; ie projected is onto the `Poincaré cylinder' rather than onto the Poincaré sphere. In the appendix of that paper this has been presented in two and three dimensions. Though the generalisation to higher dimensions is trivial, for completeness, and to make the paper more self-contained, in the following we will present this technique in four dimensions as well.

Consider the dynamical system
\begin{align}\label{E: ds PQRS}
\begin{bmatrix}
H_D \\ \Sigma_+ \\ M_1 \\ l
\end{bmatrix}' =
\begin{bmatrix}
P(H_D,\Sigma_+,M_1,l) \\
Q(H_D,\Sigma_+,M_1,l) \\
R(H_D,\Sigma_+,M_1,l) \\
S(H_D,\Sigma_+,M_1,l) \\
\end{bmatrix} \, ,
\end{align}
where $P$, $Q$, $R$ and $S$ are polynomials of degree $p$, $q$, $r$ and $s$ in $M_1$, satisfying
\begin{align}\label{E: exponent condition}
p,q,s+1\geq r \, .
\end{align}
The system~\eqref{E: ds PQRS} can also be written as
\begin{align}
P(H_D,\Sigma_+,M_1,l)\,\mathrm dM_1 - R(H_D,\Sigma_+,M_1,l)\,\mathrm dH_D &= 0 \, , \label{E: 2a} \\
Q(H_D,\Sigma_+,M_1,l)\,\mathrm dM_1 - R(H_D,\Sigma_+,M_1,l)\,\mathrm d\Sigma_+ &= 0 \, , \label{E: 2b} \\
S(H_D,\Sigma_+,M_1,l)\,\mathrm dM_1 - R(H_D,\Sigma_+,M_1,l)\,\mathrm dl &= 0 \, , \label{E: 2c}
\end{align}
where however the information of the flow direction is lost. Supported by the illustration in Figure~\ref{F: Poincare cylinder}, we now project onto the Poincaré cylinder
\begin{align}\notag
\{ (H_D,\Sigma_+,X,l,Z)\in\R^5 \, | \, X^2+Z^2=1 \}
\end{align}
via the transformation
\begin{align*}
(H_D,\Sigma_+,M_1,l) &\to (H_D,\Sigma_+,X,l,Z) : X^2+Z^2=1 \, , \\
M_1 &= X/Z \, , \\
\mathrm dM_1 &= \frac{1}{Z}\mathrm dX - \frac{X}{Z^2}\mathrm dZ \, .
\end{align*}
Applying this to~\eqref{E: 2a} and multiplying by $Z^{p+2}$, we get
\begin{align}\label{E: 2ap}
Z^{p+1}P\,\mathrm dX - Z^pXP\,\mathrm dZ - Z^{p+2}R\,\mathrm dH_D = 0 \, .
\end{align}
The flow at infinity of the original state-space projects onto the flow on the `equator' $X=1$, $Z=0$ of the Poincaré cylinder. We thus evaluate~\eqref{E: 2ap} on the `equator', and get
\begin{equation}\notag
\big( Z^pP(H_D,\Sigma_+,X/Z,l) \big)_{X=1,Z=0}\,\mathrm dZ = 0 \, ,
\end{equation}
and performing the analogous steps for the other two equations~\eqref{E: 2b} and~\eqref{E: 2c}, we have
\begin{align*}
\big( Z^qQ(H_D,\Sigma_+,X/Z,l) \big)_{X=1,Z=0}\,\mathrm dZ &= 0 \, , \\
\big( Z^sS(H_D,\Sigma_+,X/Z,l) \big)_{X=1,Z=0}\,\mathrm dZ &= 0 \, .
\end{align*}
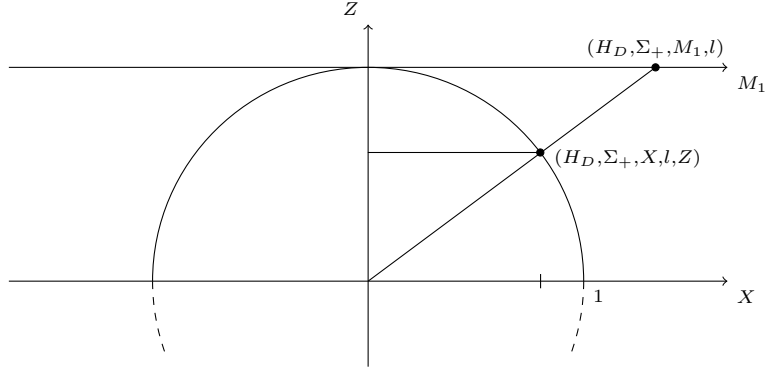
\begin{figure}\begin{tikzpicture}[scale=1.89]
	\draw[->](-2.5,0)--(2.5,0);		\node[below right]at(2.5,0){$\scriptstyle X$};		
	\draw[->](0,-.6)--(0,1.8);		\node[above left]at(0,1.8){$\scriptstyle Z$};		
	\draw[->](-2.5,1.5)--(2.5,1.5);	\node[below right]at(2.5,1.5){$\scriptstyle{M_1}$}; 	
	\draw(1.5,0)arc(0:180:1.5cm);				%
	\draw[dashed](1.5,0)arc(0:-20:1.5cm);		
	\draw[dashed](-1.5,0) arc (180:200:1.5cm);	%
	\node[below right]at(1.5,0){$\scriptstyle 1$};
	\draw(0,0)--(2,1.5);						
	\draw[fill] (2,1.5) circle (0.025);	\node[above]at(2,1.5){$\scriptstyle{(H_D,\Sigma_+,M_1,l)}$};	
	\draw[fill] (37:1.5) circle (0.025);\node[right]at(35:1.5){$\scriptstyle{(H_D,\Sigma_+,X,l,Z)}$};	
	\draw(37:1.5cm)--++(-1.2,0);\draw(1.2,-.05)--(1.2,.05);
\end{tikzpicture}
\caption{Projection onto the `Poincar\'e cylinder'. $M_1=X/Z$ by similar triangles.}
\label{F: Poincare cylinder}\end{figure}
The flow at the `equator' is then characterised by $\mathrm dZ=0$, and given by the three-dimensional dynamical system
\begin{align}\label{E: equator system}
\begin{bmatrix}
H_D \\ \Sigma_+ \\ l
\end{bmatrix}' =
\begin{bmatrix}
Z^pP(H_D,\Sigma_+,X/Z,l) \\
Z^qQ(H_D,\Sigma_+,X/Z,l) \\
Z^sS(H_D,\Sigma_+,X/Z,l)
\end{bmatrix}_{X=1,Z=0} \, .
\end{align}

In the case of the Kantowski-Sachs evolution equations~\eqref{E: HDprime} to~\eqref{E: lprime} we have
\begin{align}\label{E: pqs=0}
p,q,s=0
\end{align}
and $r=1$, hence~\eqref{E: exponent condition} is satisfied. Further, because of~\eqref{E: pqs=0} it follows that the system~\eqref{E: equator system} in this case simply corresponds to the original system with the evolution equation for $M_1$ being dropped, ie
\begin{align}\notag
\begin{bmatrix}
H_D \\ \Sigma_+ \\ l
\end{bmatrix}' =
\begin{bmatrix}
P(H_D,\Sigma_+,l) \\
Q(H_D,\Sigma_+,l) \\
S(H_D,\Sigma_+,l)
\end{bmatrix} \, ,
\end{align}
with $P$, $Q$ and $S$ given by the right hand sides of~\eqref{E: HDprime}, \eqref{E: Sigmaprime} and~\eqref{E: lprime}.

\end{appendix}

\bibliography{BibFile}{}
\bibliographystyle{siam}

\end{document}